  \providecommand\BibTeX{{%
    \normalfont B\kern-0.5em{\scshape i\kern-0.25em b}\kern-0.8em\TeX}}}
\newtheorem{theorem}{{\bf Theorem}}
\newtheorem{assumption}{{\bf Assumption}}
\newtheorem{lemma}{{\bf Lemma}}
\newtheorem{corollary}{{\bf Corollary}}
\DeclareMathOperator*{\argmin}{arg\,min}
\DeclareMathOperator*{\argmax}{arg\,max}
\begin{document}

\title{An Online Learning Approach to Optimizing Time-Varying Costs of AoI}

\author{Vishrant Tripathi}
\email{vishrant@mit.edu}
\affiliation{\institution{Massachusetts Institute of Technology} 
\country{USA}
}
\author{Eytan Modiano}
\email{modiano@mit.edu}
\affiliation{\institution{Massachusetts Institute of Technology} 
\country{USA}
}

\renewcommand{\shortauthors}{Tripathi and Modiano}

\begin{abstract}
We consider systems that require timely monitoring of sources over a communication network, where the cost of delayed information is unknown, time-varying and possibly adversarial.  
For the single source monitoring problem, we design algorithms that achieve sublinear regret compared to the best fixed policy in hindsight. 
For the multiple source scheduling problem, we design a new online learning algorithm called \textit{Follow the Perturbed Whittle Leader} and show that it has low regret compared to the best fixed scheduling policy in hindsight, while remaining computationally feasible. The algorithm and its regret analysis are novel and of independent interest to the study of online restless multi-armed bandit problems. We further design algorithms that achieve sublinear regret compared to the best dynamic policy when the environment is slowly varying.
Finally, we apply our algorithms to a mobility tracking problem. We consider non-stationary and adversarial mobility models and illustrate the performance benefit of using our online learning algorithms compared to an oblivious scheduling policy. 
\end{abstract}

\begin{CCSXML}
<ccs2012>
   <concept>
       <concept_id>10003033.10003079.10003080</concept_id>
       <concept_desc>Networks~Network performance modeling</concept_desc>
       <concept_significance>500</concept_significance>
       </concept>
   <concept>
       <concept_id>10003033.10003079.10011672</concept_id>
       <concept_desc>Networks~Network performance analysis</concept_desc>
       <concept_significance>500</concept_significance>
       </concept>
   <concept>
       <concept_id>10003033.10003106.10010582.10011668</concept_id>
       <concept_desc>Networks~Mobile ad hoc networks</concept_desc>
       <concept_significance>500</concept_significance>
       </concept>
 </ccs2012>
\end{CCSXML}

\ccsdesc[500]{Networks~Network performance modeling}
\ccsdesc[500]{Networks~Network performance analysis}
\ccsdesc[500]{Networks~Mobile ad hoc networks}

\keywords{Age of Information, wireless networks, online learning, scheduling}

\maketitle

\section{Introduction}
\label{sec:intro}
Monitoring, estimation, and control of systems are fundamental  and well studied problems. Many emerging applications involve performing these tasks over communication networks. Examples include: networked control systems, sensing for IoT applications, control of robot swarms, real-time surveillance, and monitoring of sensor networks. In these settings, achieving good performance requires timely delivery of status updates from sources to destinations.

Age of Information (AoI) is a metric that captures timeliness of received information at a destination \cite{kaul2012real,kam2013age,yin17_tit_update_or_wait}. Unlike packet delay, AoI measures the lag in obtaining information at a destination node, and is therefore suited for applications involving time sensitive updates. Age of information, at a destination, is defined as the time that has elapsed since the last received information update was generated at the source. AoI, upon reception of a new update, drops to the time elapsed since generation of the update, and grows linearly otherwise. Over the past few years, there has been a rapidly growing body of work on analyzing AoI for queuing systems \cite{kaul2012real, kam2013age, yin17_tit_update_or_wait,huang2015optimizing,inoue2018general,bedewy2019minimizing}, using AoI as a metric for scheduling policies in networks \cite{kadota2018scheduling,kadota2018scheduling2,talak2018optimizing,tripathi2017age,farazi2018age,tripathi2019whittle} and for monitoring or controlling systems over networks \cite{sun2017remote,sun2019sampling,ornee2019sampling,champati2019performance,klugel2019aoi}. For detailed surveys of AoI literature see \cite{kosta2017age} and \cite{sun2019age_book}.

Typically, AoI is used as a metric for measuring freshness of information being delivered about a source to a monitoring station. It represents a measure of distortion between the state of the system that is expected at the monitor based on past updates and the actual current state of the system. Thus, a larger age corresponds to the monitor having a higher uncertainty about the current state of the system being observed. This, in turn, means that ensuring a low average AoI can lead to higher monitoring accuracy or better control performance. While AoI is a proxy for measuring the cost of having out-of-date information, it may not properly reflect the impact of  stale information on system performance.

When multiple systems or sources are being observed at the same time, there arises a need to differentiate between them based on their relative importance. So, many works on AoI-based scheduling for multiple sources consider weighted-sum AoI minimization \cite{kadota2018scheduling,kadota2018scheduling2,talak2018optimizing}, where weights represent the relative importance of each source. Typical assumptions involve the weights being fixed and known in advance, based on the underlying application or systems being monitored.

Further, recent works on networked control systems \cite{champati2019performance,klugel2019aoi} and remote estimation \cite{sun2017remote,sun2019sampling,ornee2019sampling} emphasize that even for very simple systems, linear AoI is not a sufficiently accurate metric to track accuracy or overall system performance. This has motivated interest in using general, possibly non-linear cost functions of AoI that reflect the cost of delayed information more accurately \cite{kosta2017nlage, jhun2018age, tripathi2019whittle,champati2019performance}. Typical assumptions in works studying non-linear AoI include knowing the cost functions in advance \cite{kosta2017age,jhun2018age,tripathi2019whittle,klugel2019aoi}, assuming that cost functions increase monotonically with AoI \cite{kosta2017age,tripathi2017age,sun2019sampling,klugel2019aoi} and decoupled costs across multiple systems \cite{tripathi2019whittle,klugel2019aoi}.

Learning how to sample a source through a network with an unknown delay profile while minimizing AoI has been considered in \cite{kam2019learning}. Minimizing AoI with unknown and adversarial channel processes has also been considered in \cite{bhandari2020age} and \cite{banerjee2020fundamental}, respectively. 

Importantly, we observe that all of these works assume there is some \textit{fixed and known} cost function mapping the AoI to system performance and that the source dynamics are stationary. In this work, we ask the question: \textit{what if this cost function is not known in advance, time-varying and possibly adversarial?} How does one go about designing scheduling policies that lead to good monitoring accuracy or control? Related to our work, a context-aware notion of AoI was proposed in \cite{zheng2019context}, where the authors considered sources with \textit{known} time-varying context that influences the AoI cost function.

Our goal is to model applications where delivering timely information is of essence but the costs for delayed information are not completely known beforehand and hard to model, including non-stationary settings and adversarial dynamics. A broad range of networked control and monitoring applications fit this description. An example is designing scheduling schemes for real-time monitoring of power grids which have nonlinear and complicated dynamics that cannot be easily modeled. Another example is scheduling for mobility tracking. Mobility traces in the real world are often highly non-stationary and hard to explain via models. A third example is monitoring queue length information in data centers for load balancing, where only a small number of queues are sampled every few time-steps, and traffic flows, server outages and job sizes may be non-stationary and possibly adversarial. All of the above problems require optimization of unknown time-varying cost functions of AoI in an online fashion. 

In Section \ref{sec:single} we formulate a problem that involves monitoring a single non-stationary source over a costly communication channel. We design an epoch based framework in which the AoI cost functions change across epochs in an unknown time-varying manner, but remain fixed within an epoch. At the end of each epoch, the scheduler receives feedback (either partial or full) regarding the cost in the previous epoch and uses it to decide a policy for the next epoch\color{black}. We provide simple scheduling algorithms that have sublinear worst-case regret compared to the best fixed policy in hindsight. Our main contribution here is formulating the problem in such a way that we can apply techniques from online optimization. To the best of our knowledge, this is the first work to study monitoring and scheduling for non-stationary sources with unknown dynamics.

In Section \ref{sec:multiple_sources}, we use insights from the singe source model to develop an epoch based framework for online scheduling of multiple sources. In each epoch, the scheduler needs to decide a scheduling policy that specifies which source gets to send an update in every time-slot. The goal is to dynamically adapt the scheduling policy to optimize for overall monitoring cost, as the AoI cost functions change across epochs in an unknown manner\color{black}. Since the number of scheduling policies of a given length grows exponentially in the number of sources, it becomes computationally infeasible to implement traditional online learning algorithms directly in the multiple source setting. We design a new online learning algorithm called \textit{Follow the Perturbed Whittle Leader} (FPWL) for this setting that is computationally feasible while also achieving low regret. Here, analyzing regret is especially challenging due to the combinatorial nature of the scheduling problem and since the Whittle index is only an approximately optimal solution for the offline problem. Our algorithm and its regret analysis are novel and of independent interest to the study of online learning for restless multi-armed bandits with time-varying costs.

In Section \ref{sec:applications}, we apply the algorithms that we develop to a mobility tracking problem and illustrate the performance benefits of using online learning for scheduling.

\section{Single Source Monitoring}
\label{sec:single}
%
\begin{figure}
	\centering
	\includegraphics[width=0.8\linewidth]{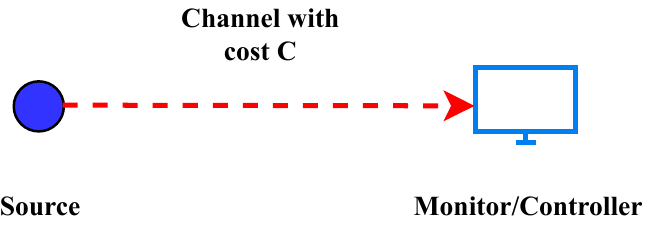}
	\caption{Single source monitoring}
	\label{figure:single}
\end{figure}
We start by discussing the single source setting with a known AoI cost function that remains fixed throughout. This will provide important insight and reveal key technical issues while formulating an online version of the problem. 

Consider a single source sending updates to a monitoring station over a costly wireless channel as in Figure \ref{figure:single}. In every time-slot, the scheduler decides whether the source sends a new update to the monitor. If it does, the monitor receives a new update in the next time-slot. The monitor maintains an age of information $A(t)$ which tracks how long it has been since it received a new update from the source. The evolution of $A(t)$ can be written as:
\begin{equation}
\label{eq:aoi_ev_single}
A(t+1) =
\begin{cases}
A(t)+1, & \text{if }u(t) = 0 \\
1, &  \text{if }u(t) = 1,
\end{cases}
\end{equation}
where $u(t)$ indicates whether a new update was sent in time-slot $t$. 

There is a known cost function of AoI $f(\cdot)$ which models the cost of having stale information at the monitor. Thus, the cost at any time-slot is:
\begin{equation}
	\text{Cost}(t) = f(A(t)) + C u(t),
\end{equation}
where $C > 0$ is the cost of sending a new update from the source. Our goal is to design a monitoring policy that minimizes the long term average cost. The average cost under a policy $\pi$ is: 
\begin{equation}
	\text{Cost}_{\text{ave.}}(\pi) \triangleq \limsup_{T \rightarrow \infty} \frac{1}{T} \sum_{t=1}^{T} f(A^{\pi}(t)) + C u^{\pi}(t). 
\end{equation}
This problem and the optimal policy have been analyzed in \cite{tripathi2019whittle} and \cite{klugel2019aoi}. We describe the key result from these works below.
	\begin{lemma}
		\label{lem:th}
		The optimal policy for the single source monitoring problem is a stationary threshold policy. Let $H$ satisfy
		\begin{equation}
		f(H) \leq \frac{\sum_{j=1}^{H}f(j) + C}{H} \leq f(H+1).
		\label{eq:th}
		\end{equation}
		Then, the optimal policy is to send an update at time-slot $t$ only if $A(t) \geq H$. If no such $H$ exists, the optimal policy is to never send an update.
	\end{lemma}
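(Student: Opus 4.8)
The plan is to prove optimality by a renewal-reward (cycle) argument that lower-bounds the cost of \emph{every} admissible policy by $\min_H D(H)$, where $D(H) := \big(\sum_{j=1}^{H} f(j) + C\big)/H$, and is then met with equality by a threshold policy; afterwards I would characterize the optimal threshold by a discrete first-order condition. First I would fix any policy $\pi$ that sends infinitely often and let $t_1 < t_2 < \cdots$ be its update times. Since $A(t)$ resets to $1$ in the slot after each update and increases by $1$ otherwise (by \eqref{eq:aoi_ev_single}), the age over the $k$-th inter-update interval of length $\tau_k := t_{k+1} - t_k$ takes exactly the values $1, 2, \ldots, \tau_k$, and one update cost $C$ is paid. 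Hence the cost accrued in that cycle is precisely $\sum_{j=1}^{\tau_k} f(j) + C = \tau_k D(\tau_k)$. In particular, a threshold policy with threshold $H$ produces cycles of constant length $H$, so by renewal-reward its average cost equals $D(H)$, the quantity appearing in \eqref{eq:th}.

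The heart of the argument is the universal lower bound. Summing cycle costs over the first $N$ complete cycles and dividing by elapsed time gives $\frac{\sum_{k=1}^N \tau_k D(\tau_k)}{\sum_{k=1}^N \tau_k} \geq \min_{H \geq 1} D(H)$, because the left side is a convex combination of the values $D(\tau_k)$ with weights $\tau_k$. Passing to the $\limsup$ and absorbing the single partial cycle at the horizon into a vanishing $O(1/T)$ term yields $\text{Cost}_{\text{ave.}}(\pi) \geq \min_H D(H)$ for every such $\pi$; policies that send only finitely often drive $A(t)\to\infty$ and are treated separately, their cost being infinite when $f$ is non-decreasing and unbounded, and equal to the ``never send'' value in the bounded (degenerate $H=\infty$) case. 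Since the threshold policy with $H^\star \in \argmin_H D(H)$ attains the bound with equality, it is optimal, which establishes the claimed stationary threshold structure \emph{without} appealing to heavy average-cost MDP machinery.

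To identify $H^\star$ I would compute the forward difference $D(H+1) - D(H) = \frac{H f(H+1) - \sum_{j=1}^{H} f(j) - C}{H(H+1)}$, whose sign equals that of $f(H+1) - D(H)$; similarly $D(H) - D(H-1)$ has the sign of $f(H) - D(H)$, using the identity $f(H) \le D(H) \iff f(H) \le D(H-1)$. Thus $H$ is a local minimum of $D$ exactly when $f(H) \le D(H) \le f(H+1)$, which is precisely \eqref{eq:th}, and the associated threshold $A(t)\ge H$ is the stated policy. For the local-to-global upgrade I would use that $D(H+1) = \frac{H D(H) + f(H+1)}{H+1}$ is a weighted average of $D(H)$ and $f(H+1)$, so once $f(H+1) > D(H)$ monotonicity of $f$ forces $f(H+2) > D(H+1)$; hence $D$ decreases then increases, and any local minimum is global. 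If no $H$ satisfies \eqref{eq:th}, then $D$ is strictly decreasing, its infimum is approached only as $H\to\infty$, and never sending is optimal.

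The main obstacle I anticipate is making the lower-bound step fully rigorous for arbitrary history-dependent and randomized policies: controlling the boundary contribution of the final incomplete cycle inside the $\limsup$, and cleanly disposing of the finitely-many-updates case. A secondary subtlety is that the local-to-global step relies on $f$ being non-decreasing for the unimodality of $D$; the renewal computation and the first-order characterization themselves are then routine.
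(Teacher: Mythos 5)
Your proposal is correct. Note that the paper does not actually prove Lemma~\ref{lem:th} internally --- its ``proof'' is a pointer to Theorem~1 of the cited Whittle-index paper --- so there is no in-text argument to compare against line by line; your renewal-reward derivation is a valid, self-contained substitute and is in the same spirit as that reference. The two load-bearing steps both check out: (i) the cycle decomposition gives cost $\tau_k D(\tau_k)$ per inter-update interval with $D(H) = (\sum_{j=1}^{H} f(j) + C)/H$, and the time-weighted-average inequality plus the $O(1/T)$ boundary correction (the partial cycle contributes at least $s\,\inf_H D(H) - C$) yields the universal lower bound $\inf_H D(H)$, met with equality by the threshold policy; (ii) the sign computations $D(H{+}1)-D(H) \propto f(H{+}1)-D(H)$ and $f(H)\le D(H) \iff f(H)\le D(H{-}1)$ show that \eqref{eq:th} is exactly the discrete first-order condition, and the weighted-average identity $D(H{+}1)=\frac{HD(H)+f(H{+}1)}{H+1}$ together with monotonicity of $f$ gives unimodality of $D$, upgrading local to global and handling the ``no such $H$'' case as $D$ strictly decreasing with $\inf$ attained only at infinity. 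The one hypothesis you correctly flag --- that $f$ is non-decreasing --- is not stated explicitly in this section of the paper but is assumed in the cited reference and is genuinely needed for the local-to-global step and for the never-send/finitely-many-updates cases; it would be worth stating it up front rather than leaving it as a caveat at the end.
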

\begin{proof}
	See Theorem 1 in \cite{tripathi2019whittle}.
\end{proof}
Lemma~\ref{lem:th} implies that the optimal monitoring policy is to send an update only if the current AoI gets above a threshold $H$. Similar threshold based schemes have appeared in many different settings for online sampling and remote estimation \cite{sun2017remote,sun2019sampling,ornee2019sampling,yun2018optimal}.

Now consider a naive reformulation of the problem where the cost function $f(\cdot)$ is time-varying and unknown. We represent it by $f_t(\cdot)$ where the subscript indicates its time varying nature. If the function were fixed and unknown, we could have used reinforcement learning to solve the problem as done in \cite{kam2019learning}. However, this cannot be done directly for settings with time-varying costs.

On the other hand, there is a large amount of literature on online learning and optimization where the goal is to solve a sequence of optimization problems which vary in an unknown, possibly adversarial manner (see \cite{cesa2006prediction} and \cite{hazan2019introduction} for a detailed introduction to the field). In these problems there is no system state or history, so decisions in the current time-slot do not affect the optimization problem or decisions in a future time-slot. This is not true of our monitoring problem which has a system state (AoI), and where the state evolution depends on decisions taken in the past.

To overcome these difficulties, we reformulate the single source monitoring problem in an epoch based setting. 

\subsection{An Epoch Based Formulation}
\label{sec:model}
We observe that the AoI of the source resets to $1$ after every new update delivery. Thus, AoI evolution within an update inter-delivery period does not depend on the AoI evolution in any other period. We use this observation to formulate an epoch based problem.

We divide time into $T$ epochs, where each epoch further consists of $M$ time-slots. As before, when a new update is sent it gets delivered in the next time-slot. At the beginning of epoch $k$, we choose an AoI threshold $x_k \in \{1,...,M\}$. Within the epoch, the source generates a new sample and sends it to the monitor whenever the AoI reaches the threshold $x_k$. In the last time-slot of the epoch, a new sample is sent regardless of the AoI. This ensures that the next epoch begins with AoI at 1. A cost is observed for sending samples every $x_k$ time-slots based on the current system dynamics and communication costs. Then, epoch $k+1$ starts. Using cost information about previous decisions, a new sampling threshold $x_{k+1}$ is chosen for epoch $k+1$ and the process repeats itself. 

In each epoch $k$ there is a function $f_k(\cdot)$ that represents the current cost for age of information and remains fixed for the duration of the epoch. So, for any time-slot $t$ in epoch $k$, the current cost is given by $f_k(A(t)) + Cu(t)$. The total cost incurred in epoch $k$ denoted by $C_k(x)$ is simply the sum of the cost in the individual time-slots.  
\begin{lemma}
\label{lem:single_epoch}
If a sampling threshold of $x$ is chosen in epoch $k$ and the AoI cost function is $f_k(\cdot)$ then the loss function $C_k(x)$ is given by:
\begin{equation}
\label{eq:epoch_cost}
C_k(x) = \bigg\lfloor \frac{M}{x} \bigg\rfloor \bigg(\sum_{j=1}^{x}f_k(j) + C\bigg) + \vmathbb{1}_{r > 0} \bigg(\sum_{j=1}^{r}f_k(j) + C\bigg),
\end{equation}
where $r = M \mod x$. This is the sum total AoI cost of monitoring over the epoch $k$.
\end{lemma}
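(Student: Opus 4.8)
The plan is to compute the epoch cost directly by tracing the age-of-information trajectory induced by the threshold policy and then summing the per-slot costs. Since the policy transmits exactly when the age reaches $x$, the next slot has age $1$, so within the epoch the trajectory is periodic with period $x$: starting from $A = 1$ at the first slot, the age cycles through the values $1, 2, \dots, x$ and then returns to $1$. The key structural fact, inherited from the reset observation stated just before the lemma, is that each such cycle is self-contained — its cost depends only on $f_k$ and on the cycle length — so the total epoch cost decomposes additively over cycles.

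First I would establish the cost of a single full cycle. Over the $x$ consecutive slots of one cycle the age takes the values $1, \dots, x$, contributing $\sum_{j=1}^{x} f_k(j)$ in age cost, and exactly one transmission occurs (in the slot where $A = x$), contributing $C$. Hence the cost of a full cycle is $\sum_{j=1}^{x} f_k(j) + C$.

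Next I would count the cycles. Writing $M = x \lfloor M/x \rfloor + r$ with $r = M \bmod x$, the epoch contains exactly $\lfloor M/x \rfloor$ complete cycles, which accounts for the first term in (\ref{eq:epoch_cost}). The remaining $r$ slots form a partial cycle, and here I would split into two cases. If $r = 0$, the epoch ends precisely at the conclusion of a full cycle; the transmission of that cycle has already reset the age, so the next epoch begins with $A = 1$ and no additional cost is incurred, consistent with the indicator vanishing. If $r > 0$, the age over the leftover slots takes the values $1, \dots, r$, contributing $\sum_{j=1}^{r} f_k(j)$; because $r < x$ the threshold is never reached in these slots, so the only transmission is the forced update in the last slot of the epoch (the ``send regardless'' rule), contributing a single $C$. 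This produces the remainder term $\sum_{j=1}^{r} f_k(j) + C$, and summing yields (\ref{eq:epoch_cost}).

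The argument is essentially a bookkeeping computation, so I do not anticipate a serious obstacle; the one point requiring care is the boundary between the full cycles and the remainder. Specifically, I must verify that the forced transmission in the final slot is counted once and only once — it coincides with a threshold-triggered transmission precisely when $r = 0$, which is exactly the case the indicator excludes — and that the partial cycle indeed begins at $A = 1$ rather than inheriting a larger age from the preceding cycle. Checking these two facts guarantees that the transmission cost $C$ is neither double-counted nor omitted at the epoch boundary.
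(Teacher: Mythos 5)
Your proposal is correct and follows essentially the same cycle-decomposition bookkeeping as the paper's own proof: identify the periodic age trajectory, compute the cost of one full cycle as $\sum_{j=1}^{x} f_k(j) + C$, count $\lfloor M/x \rfloor$ full cycles, and handle the length-$r$ remainder with the forced final transmission. Your extra care about not double-counting the forced transmission when $r=0$ is a slight refinement of detail but not a different argument.
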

\begin{proof}
See Appendix \ref{pf:single_epoch}.
\end{proof}
If we knew $f_k(\cdot)$ at the beginning of epoch $k$, we could use \eqref{eq:epoch_cost} to find the optimal sampling threshold $x_k^{\ast}$. In our online framework, the goal is to learn the best sampling thresholds without knowing any information about the sequence of cost functions that we are going to face. 

While we have motivated the setting above using cost that splits into a sum of AoI cost and communication cost, our setup allows for general cost functions $C_k(x_k)$ that map the choice of sampling threshold $x_k$ to a cost in epoch $k$. For the remainder of this section, we will deal with these general cost functions $C_k(\cdot)$. 

In our online setting, an \textbf{unconstrained adversary} chooses the sequence of \textbf{bounded} cost functions $C_k(\cdot)$ for each epoch $k$. The designer does not have access to the sequence of cost functions beforehand and must learn a suitable transmission/sampling policy in an online manner. We make no assumptions on how the underlying system dynamics or resulting cost functions change across epochs.

Note that the cost function $C_k(\cdot)$ in epoch $k$ can be seen as an $M$ dimensional vector where the cost for choosing the sampling threshold $x$ is represented by the $C_k(x)$ which is the $x$th element of the vector. Going forward, when we use the notation $C_k$, we refer to the $M$ dimensional vector of costs for each threshold in epoch $t$, while $C_k(x)$ represents its $x$th element.

The boundedness of the cost functions $C_k$ is crucial to proving any meaningful results in this setting and is standard in online learning literature. Without loss of generality, we further assume that the cost functions are normalized such that $C_k(x) \in [0,1]$ for all sampling thresholds $x$ and epochs $k$.

\subsubsection{Feedback Structure} For the setup described above, we will look at two kinds of feedback structure for observing the costs. Note that $x_k$ represents the decision made at the beginning of epoch $k$.
\begin{itemize}
	\item {Full Feedback} - the scheduler observes the entire cost function $C_k(x), \forall x \in \{1,...,M\}$ at the end of epoch $k$.
	\item {Bandit Feedback} - the scheduler observes only $C_k(x_k)$ at the end of epoch $k$. 
\end{itemize}
\subsubsection{Objective (Regret Minimization)}: For any sequence of cost functions $C_1(\cdot), C_2(\cdot), ..., C_T(\cdot)$, $x^{*}$ is defined as the best fixed sampling threshold that minimizes sum AoI cost. It is given by the following equation.
\begin{equation}
\label{eq:opt_x}
x^{*} \triangleq \argmin \limits_{ x \in \{1,...,M\} } \sum_{k=1}^{T} C_k(x).
\end{equation}
Our goal is to find an online policy that achieves sublinear regret compared to the best fixed sampling threshold $x^{*}$ for any sequence. This is known as \textit{worst-case static regret}. For any policy $\pi$, it is defined as follows:
	\begin{equation}
	\label{eq:regret}
	\text{Regret}_T(\pi) = \sup_{C_1,...,C_T} \bigg\{  \sum_{k=1}^{T} C_k(x^{\pi}_k) - \min_{ x \in \{1,...,M\} } \sum_{k=1}^{T} C_k(x) \bigg\}.
	\end{equation}	
Note that regret is defined over epochs rather than time-slots since we assume that cost functions can change only across epochs. 

We will now show that our online sampling problem formulation is equivalent to the prediction with expert advice setting that is well studied in online learning literature. This will allow us to apply policies and regret bounds derived for this setting to our problem.

\subsubsection{Prediction With Expert Advice:} A decision maker has to choose among the advice of $n$ given experts. After making a choice, a bounded loss is incurred. This scenario is repeated iteratively, and at each iteration the costs of choosing the various experts are arbitrary (possibly even adversarial, trying to mislead the decision maker). The goal of the decision maker is to do as well as the best expert in hindsight.

In our setting, the role of experts is played by the AoI thresholds $x \in  \{1,...,M\}$. In each epoch, the scheduler decides an AoI sampling threshold $x$ and observes an associated cost. This process repeats iteratively with time-varying, possibly adversarial changes to costs. Thus, our setting corresponds with the expert advice setting with $M$ experts. 

\subsubsection{Sublinear Regret}
We now discuss in detail a policy that achieves sublinear static regret for the full feedback setting. This will illustrate how regret bounds from online learning literature can be applied to our single source online monitoring setup.  

The full feedback assumption in our setting means that we observe costs for all possible sampling thresholds in every epoch. This makes sense when the scheduler has information about the current source dynamics and communication costs by the end of an epoch. Knowing this information is often sufficient to construct the current cost function for any possible sampling threshold. 

We describe an online monitoring policy based on Follow the Perturbed Leader (FTPL) style algorithms. The FTPL method was first analyzed in the online setting in \cite{kalai2005efficient} and is based on an algorithm first proposed in \cite{hannan1957approximation}. The key idea of the FTPL algorithm is to maintain the sum of cost functions observed until now and perturb it slightly. Choosing the best AoI threshold based on this perturbed history is sufficient to get sublinear regret. 
\begin{algorithm}
	\DontPrintSemicolon
	\SetKwInOut{Input}{Input}\SetKwInOut{Output}{Output}
	\Input{parameter $\eta > 0$, number of thresholds $M$}
	\BlankLine
	Set $\Theta_1 \leftarrow 0$ \\		
	\While{ $t \in 1,...,T$ }{
		Sample $\gamma_t \sim \mathcal{N}(0,I)$\;
		Choose sampling threshold $x_t \in \argmin\limits_{ x \in \{1,...,M\} } \Theta_t(x) + \eta \gamma_t(x)$\;
		Incur loss $C_t(x_t)$ and update $\Theta_{t+1} = \Theta_{t} + C_t$ 				
	}		
	\caption{FTPL for Online Monitoring}
	\label{alg:FTPL}
\end{algorithm}

	\begin{theorem}
		FTPL online monitoring described by Algorithm \ref{alg:FTPL} with $\eta = \sqrt{T}$ achieves the following upper bound for expected regret:
		\begin{equation*}
			\mathbb{E}[\text{Regret}_T(\text{FTPL})] \leq 2 \sqrt{2T \log M },
		\end{equation*}
		where the expectation is taken over the random perturbations.
		\label{thm:FTPL_regret}
	\end{theorem}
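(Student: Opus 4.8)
The plan is to read Algorithm~\ref{alg:FTPL} through the Gaussian-smoothed potential $\Phi(\Theta) \triangleq \mathbb{E}_{\gamma\sim\mathcal{N}(0,I)}\big[\min_{x}\,\Theta(x)+\eta\gamma(x)\big]$, a concave and smooth surrogate for the hard minimum $\min_x\Theta(x)$. Since the adversary is oblivious (the regret in \eqref{eq:regret} is a supremum over fixed cost sequences and our choices $x_t$ do not feed back into the costs), the cumulative-loss vector $\Theta_t=\sum_{s<t}C_s$ is deterministic, so resampling $\gamma_t$ afresh each epoch does not change the per-epoch expected loss. A differentiation-under-the-integral (Danskin) argument shows that $\nabla\Phi(\Theta)$ is exactly the probability vector $p(\Theta)$ with $p_x(\Theta)=\Pr_\gamma[x\in\argmin_{x'}\Theta(x')+\eta\gamma(x')]$, and hence $\mathbb{E}_{\gamma_t}[C_t(x_t)]=\langle C_t,\nabla\Phi(\Theta_t)\rangle$. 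Therefore the expected regret equals $\sum_{t=1}^T\langle C_t,\nabla\Phi(\Theta_t)\rangle-\min_x\Theta_{T+1}(x)$.

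Next I would split this into a perturbation-overhead term and a stability term. Concavity of $\Phi$ gives the first-order bound $\Phi(\Theta_{t+1})\le\Phi(\Theta_t)+\langle\nabla\Phi(\Theta_t),C_t\rangle$, so writing $R_t \triangleq \langle C_t,\nabla\Phi(\Theta_t)\rangle-\big(\Phi(\Theta_{t+1})-\Phi(\Theta_t)\big)\ge 0$ and telescoping,
\begin{equation*}
\sum_{t=1}^T\langle C_t,\nabla\Phi(\Theta_t)\rangle = \Phi(\Theta_{T+1})-\Phi(0)+\sum_{t=1}^T R_t.
\end{equation*}
The overhead is controlled by two elementary facts about the smoothing: $\Phi(\Theta_{T+1})\le\min_x\Theta_{T+1}(x)$ (take $x$ to be the minimizer and use $\mathbb{E}[\gamma]=0$), and $-\Phi(0)=\eta\,\mathbb{E}[\max_x\gamma(x)]\le\eta\sqrt{2\log M}$ by the standard maximal inequality for $M$ standard Gaussians. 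Subtracting $\min_x\Theta_{T+1}(x)$ leaves $\mathbb{E}[\text{Regret}_T]\le \eta\sqrt{2\log M}+\sum_{t=1}^T R_t$.

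The crux is bounding the stability term $\sum_t R_t$. Each remainder is a second-order quantity, $R_t=\int_0^1\langle C_t,\nabla\Phi(\Theta_t)-\nabla\Phi(\Theta_t+sC_t)\rangle\,ds$, which measures how much the argmin distribution $p=\nabla\Phi$ moves when a single loss vector with $\|C_t\|_\infty\le 1$ is added. Writing $x^*_s$ for the perturbed minimizer of $\Theta_t+sC_t$ under a common $\gamma$, one has $C_t(x^*_0)-C_t(x^*_s)\le 1$ with this difference nonzero only when the minimizer actually changes, so $R_t\le\Pr_\gamma[x^*_0\ne x^*_1]$, the probability that the perturbed leader changes when $C_t$ is added. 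I would bound this change probability via anti-concentration of the minimum of $M$ drifted Gaussians: because the perturbations have scale $\eta$, the gap between the smallest and second-smallest perturbed scores is typically of order $\eta/\sqrt{\log M}$, so a bounded shift moves the minimizer with probability $O(\sqrt{\log M}/\eta)$, giving $\sum_t R_t\le \sqrt{2T\log M}$ after summing $T$ epochs with $\eta=\sqrt{T}$. This Gaussian smoothness/anti-concentration estimate is the main obstacle — equivalently one must upper bound the curvature $-C_t^\top\nabla^2\Phi\,C_t$ with only a logarithmic (rather than linear) dependence on $M$, which is precisely where the tightness of the $\sqrt{\log M}$ factor is won or lost; the naive bound $\mathbb{E}|\langle\gamma,C_t\rangle|\le\sqrt{M}$ is far too weak and must be replaced by an argument that exploits that only the minimizing coordinate contributes.

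Combining the two pieces gives $\mathbb{E}[\text{Regret}_T]\le\eta\sqrt{2\log M}+\sqrt{2T\log M}$; setting $\eta=\sqrt{T}$ yields $2\sqrt{2T\log M}$, as claimed. As a shortcut, since Section~\ref{sec:single} has already reduced the problem to prediction with expert advice, one could instead invoke the known regret guarantee for Gaussian-perturbation FTPL from the online-learning literature (\cite{kalai2005efficient,hazan2019introduction}) and merely specialize the constants to $M$ experts and unit-bounded losses.
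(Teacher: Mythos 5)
Your proposal is correct and follows essentially the same route as the paper, whose ``proof'' is simply a pointer to \cite{cohen2015following}: the Gaussian-smoothed potential $\Phi$, the concavity/telescoping decomposition into an overhead term $\eta\sqrt{2\log M}$ and a stability term, and the final balancing at $\eta=\sqrt{T}$ are exactly the argument of that reference. The one step you leave at the level of a sketch --- bounding the per-epoch stability $R_t$ by $O(\sqrt{\log M}/\eta)$ via anti-concentration of the perturbed minimum (equivalently, the curvature bound on $\nabla^2\Phi$ obtained through Stein's identity) --- is precisely the key lemma of the cited work, so you have correctly identified where the real content lies.
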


\begin{proof}
The proof is based on \cite{cohen2015following}. A lower bound of the form $\Omega(\sqrt{T \log M})$ is also available in \cite{cesa2006prediction}. 
\end{proof}

\subsubsection{Bandit Feedback}
The bandit feedback assumption implies that we only observe the cost associated with the chosen sampling threshold. This is a realistic assumption especially when no other information about the system dynamics and communication costs is available to the scheduler. However, the single point feedback means learning happens slowly and regret bounds are worse in this setting.

The online bandit setting has also been well studied in literature. Notably, the EXP3 algorithm was first proposed in the seminal paper \cite{auer2002nonstochastic} and is known to have near optimal expected regret under bandit feedback. We describe online monitoring based on EXP3 below.
\begin{algorithm}
	\DontPrintSemicolon
	\SetKwInOut{Input}{Input}\SetKwInOut{Output}{Output}
	\Input{parameter $\epsilon > 0$, distribution $p_1 = \mathbf{1}/M$}
	\While{ $t \in 1,...,T$ }{
		Choose sampling threshold $x_t \sim p_t$\;
		Incur loss $C_t(x_t)$ and observe  $C_t(x_t)$\;
		Let 
		\begin{equation*}
		\hat{C}_t(i) =
		\begin{cases}
		C_t(i)/p_t(i), & \text{ if }i = x_t  \\
		0, &  \text{ otherwise.}
		\end{cases}
		\end{equation*}\;
		Update $y_{t+1}(i) = p_t(i)e^{-\epsilon \hat{C_t}(i)}$, $p_{t+1}= \frac{y_{t+1}}{||y_{t+1}||_1}$				
	}		
	\caption{EXP3 for Online Monitoring}
	\label{alg:exp3}
\end{algorithm}

The key idea of Algorithm \ref{alg:exp3} is to maintain an unbiased estimate of the cost $C_t$ via importance sampling (line 4). In every epoch, the algorithm samples a threshold $x_t$ from a probability distribution $p_t$ over the $M$ thresholds. At the end of the epoch, $p_t$ is updated with the current cost function estimate using exponential weights. It can be shown that the expected regret of EXP3 is sublinear and near optimal. Theorem \ref{thm:EXP3_regret} provides an upper bound on the expected regret in the bandit feedback setting.

	\begin{theorem}
		The EXP3 online sampling policy described by Algorithm \ref{alg:exp3} with $\epsilon = \sqrt{\frac{\log M}{T M}}$ achieves the following upper bound for expected regret:
		\begin{equation*}
		\mathbb{E}[\text{Regret}_T(\text{EXP3})] \leq 2 \sqrt{T M \log M }.
		\end{equation*}
		The expectation is taken over the random sampling decisions made in each epoch.
		\label{thm:EXP3_regret}
	\end{theorem}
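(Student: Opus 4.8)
The plan is to recognize Algorithm~\ref{alg:exp3} as the classical EXP3 algorithm, i.e.\ exponential weights (Hedge) run not on the true losses $C_t$ but on the importance-weighted estimates $\hat{C}_t$ from line~4. The analysis therefore splits into two pieces: a deterministic regret bound for exponential weights applied to the estimated losses, followed by a probabilistic step that relates the estimated losses back to the true losses through the properties of the estimator.

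First I would record the two key properties of $\hat{C}_t$. Writing $\mathcal{F}_{t-1}$ for the history up to the start of epoch $t$ (so that $p_t$ is $\mathcal{F}_{t-1}$-measurable), the estimator is unbiased, $\mathbb{E}[\hat{C}_t(i) \mid \mathcal{F}_{t-1}] = C_t(i)$, because $x_t$ equals $i$ with probability exactly $p_t(i)$ and then $\hat{C}_t(i) = C_t(i)/p_t(i)$. The same computation controls its conditional second moment: $\mathbb{E}[\sum_i p_t(i)\hat{C}_t(i)^2 \mid \mathcal{F}_{t-1}] = \sum_i C_t(i)^2 \le M$, where the final inequality uses the normalization $C_t(i)\in[0,1]$. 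It is this in-expectation variance bound, rather than any pointwise bound on $\hat{C}_t$, that I expect to be the crux of the argument: the individual estimates $\hat{C}_t(i)$ are unbounded (they blow up as $p_t(i)\to 0$), so the bounded-loss Hedge bound cannot be invoked directly, and the factor $M$ — which is exactly what turns the $\sqrt{T\log M}$ rate of the full-information Theorem~\ref{thm:FTPL_regret} into the $\sqrt{TM\log M}$ rate here — enters precisely through this term.

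For the deterministic piece I would run the usual potential argument. Since line~5 gives $p_{t+1}(i) \propto p_t(i)e^{-\epsilon\hat{C}_t(i)}$, the normalizer $Z_t = \sum_i p_t(i)e^{-\epsilon\hat{C}_t(i)}$ satisfies, via $e^{-z}\le 1 - z + z^2/2$ for $z\ge 0$ and $\log(1+z)\le z$,
\begin{equation*}
\log Z_t \le -\epsilon\sum_i p_t(i)\hat{C}_t(i) + \frac{\epsilon^2}{2}\sum_i p_t(i)\hat{C}_t(i)^2.
\end{equation*}
Summing over $t$ and telescoping the product of the $Z_t$ against the uniform initialization $p_1 = \mathbf{1}/M$, together with $p_{T+1}(x^\ast)\le 1$, yields for the fixed comparator $x^\ast$ of \eqref{eq:opt_x}
\begin{equation*}
\sum_{t=1}^{T}\sum_i p_t(i)\hat{C}_t(i) - \sum_{t=1}^{T}\hat{C}_t(x^\ast) \le \frac{\log M}{\epsilon} + \frac{\epsilon}{2}\sum_{t=1}^{T}\sum_i p_t(i)\hat{C}_t(i)^2.
\end{equation*}

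Finally I would take expectations of this inequality. On the left, $\mathbb{E}[\sum_i p_t(i)\hat{C}_t(i)] = \mathbb{E}[C_t(x_t)]$ recovers the algorithm's expected loss, while $\mathbb{E}[\hat{C}_t(x^\ast)] = C_t(x^\ast)$ recovers the loss of the best fixed threshold; on the right, the variance term is at most $M$ per epoch by the second-moment bound above. This gives $\mathbb{E}[\text{Regret}_T(\text{EXP3})] \le \frac{\log M}{\epsilon} + \frac{\epsilon T M}{2}$, and substituting $\epsilon = \sqrt{\log M/(TM)}$ makes the two terms equal to $\sqrt{TM\log M}$ and $\tfrac{1}{2}\sqrt{TM\log M}$, whose sum is bounded by $2\sqrt{TM\log M}$ as claimed. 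The constant is loose here; optimizing over $\epsilon$ would instead give $\sqrt{2TM\log M}$.
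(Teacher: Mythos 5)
Your proposal is correct, and it is the standard argument for this statement: the paper itself gives no proof, deferring entirely to \cite{auer2002nonstochastic}, and your write-up is precisely the analysis that citation stands in for. Each step checks out: the estimator is unbiased and satisfies $\sum_i p_t(i)\hat{C}_t(i) = C_t(x_t)$ pointwise, the conditional second moment $\sum_i C_t(i)^2 \le M$ is indeed where the extra factor of $M$ enters, the potential argument via $e^{-z}\le 1-z+z^2/2$ for $z\ge 0$ is valid because the estimated losses are nonnegative, and the arithmetic with $\epsilon=\sqrt{\log M/(TM)}$ gives $\tfrac{3}{2}\sqrt{TM\log M}\le 2\sqrt{TM\log M}$. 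One point worth noting: the original analysis in \cite{auer2002nonstochastic} is phrased in terms of gains and mixes in explicit uniform exploration, whereas Algorithm~\ref{alg:exp3} as stated has no exploration term; your loss-based potential argument is the right one for the algorithm exactly as written, since nonnegativity of the losses is what removes the need for forced exploration. The only gap, shared with the paper, is that the argument establishes the bound for each fixed (oblivious) sequence $C_1,\dots,C_T$ and hence bounds $\sup_{C_1,\dots,C_T}\mathbb{E}[\cdot]$, whereas the regret in \eqref{eq:regret} formally places the supremum inside the expectation; this is standard looseness in the literature and not specific to your proof.
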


\begin{proof}
The proof and a lower bound of the form $\Omega(\sqrt{T M})$ follow from discussion in \cite{auer2002nonstochastic}.
\end{proof}


Next, we discuss what sublinear epoch regret means for AoI cost averaged over time-slots. To do so, we note that the sum total AoI cost over all time-slots equals the cost summed over individual epochs, by definition. Let $\pi$ denote an online algorithm which specifies threshold $x^{\pi}_k$ to be chosen in epoch $k$ and let $E_k$ be the set of times-slots in epoch $k$. Then,
\begin{equation}
\label{eq:epoch_time}
    \sum_{k=1}^{T} C_k(x^{\pi}_k) = \sum_{k=1}^{T} \sum_{t \in E_k} f_k(A^{\pi}(t)) + C u^{\pi}(t).  
\end{equation}
The relation above immediately implies that epoch regret also equals regret over time-slots. We describe this in the lemma below.
\begin{lemma}
\label{lem:epoch_time}
    Suppose an online algorithm $\pi$ has an upper bound on its expected static epoch regret of the form $f(M,T)$. Let $\pi^{*}$ denote the policy corresponding to the best fixed AoI threshold $x^{*}$ given the entire sequence of AoI cost functions $f_1,...,f_T$ and the sampling cost $C$. Then for any bounded sequence of cost functions, the same upper bound holds for regret over time-slots, i.e.
    \begin{equation}
    \begin{split}
        \mathbb{E}\Bigg[\sup_{f_1,...,f_T} \bigg\{  \sum_{k=1}^{T} \sum_{t \in E_k} f_k(A^{\pi}(t)) + C u^{\pi}(t) -\\ \sum_{k=1}^{T} \sum_{t \in E_k} f_k(A^{\pi^{*}}(t)) + C u^{\pi^{*}}(t) \bigg\}\Bigg] \leq f(M,T),
    \end{split}
    \end{equation}

\end{lemma}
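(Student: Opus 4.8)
The plan is to reduce the time-slot regret to the epoch regret using the identity \eqref{eq:epoch_time}, which holds by the very definition of $C_k(\cdot)$ in Lemma~\ref{lem:single_epoch}. First I would fix an arbitrary realization of the internal randomness of $\pi$ (the perturbations in FTPL or the sampling draws in EXP3), so that the chosen threshold sequence $x_1^{\pi},\dots,x_T^{\pi}$ is deterministic. For this fixed sequence, equation \eqref{eq:epoch_time} says that the accumulated time-slot cost of $\pi$ equals exactly $\sum_{k=1}^{T} C_k(x_k^{\pi})$. Applying the same identity to the comparator policy $\pi^{*}$, whose threshold is the constant $x^{*}$ in every epoch, gives $\sum_{k=1}^{T}\sum_{t\in E_k} f_k(A^{\pi^{*}}(t)) + Cu^{\pi^{*}}(t) = \sum_{k=1}^{T} C_k(x^{*})$. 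Subtracting, the per-realization time-slot regret equals $\sum_{k=1}^{T} C_k(x_k^{\pi}) - \sum_{k=1}^{T} C_k(x^{*})$; that is, it coincides pathwise with the epoch regret.

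Next I would reconcile the two suprema. The epoch-regret bound $f(M,T)$ is stated as a supremum over all bounded normalized cost vectors $C_1,\dots,C_T$, whereas the present statement takes a supremum over AoI cost-function sequences $f_1,\dots,f_T$. The point is that every sequence $f_1,\dots,f_T$ induces, through \eqref{eq:epoch_cost}, a particular cost-vector sequence $C_1,\dots,C_T$, and by the normalization assumption each induced entry satisfies $C_k(x)\in[0,1]$. Hence the set of cost-vector sequences reachable by varying the $f_k$ is a subset of all admissible cost-vector sequences. Moreover, by \eqref{eq:opt_x} the comparator threshold $x^{*}$ that minimizes $\sum_k C_k(x)$ is precisely the best fixed AoI threshold for the induced cost vectors, so $\pi^{*}$ is the correct comparator in both formulations. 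Taking the supremum of the pathwise identity over $f_1,\dots,f_T$ therefore yields, for each fixed realization, a quantity bounded above by the supremum over all admissible $C_1,\dots,C_T$, which is $\text{Regret}_T(\pi)$.

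Finally I would take expectations over the randomization. Since the pathwise inequality $\sup_{f_1,\dots,f_T}\{\text{time-slot regret}\} \le \sup_{C_1,\dots,C_T}\{\text{epoch regret}\}$ holds for every realization of the perturbations or sampling draws, taking $\mathbb{E}[\cdot]$ on both sides and invoking the assumed bound gives $\mathbb{E}\big[\sup_{f_1,\dots,f_T}\{\text{time-slot regret}\}\big] \le \mathbb{E}[\text{Regret}_T(\pi)] \le f(M,T)$, which is the claim. I expect the lemma to be essentially immediate once \eqref{eq:epoch_time} is invoked; the only point requiring genuine care is the domain-matching step between the supremum over cost functions $f_k$ and the supremum over cost vectors $C_k$, together with verifying that the induced vectors respect the $[0,1]$ normalization so that the previously proved epoch-regret bound applies verbatim.
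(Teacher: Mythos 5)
Your proof is correct and follows essentially the same route as the paper's one-line argument, which simply substitutes the identity \eqref{eq:epoch_time} into the regret definition \eqref{eq:regret}. The extra care you take in matching the supremum over AoI cost functions $f_1,\dots,f_T$ to the supremum over admissible cost vectors $C_1,\dots,C_T$ (noting the induced vectors form a subset respecting the $[0,1]$ normalization) is a detail the paper glosses over, but it does not change the substance of the argument.
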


\begin{proof}
Substituting $\sum_{k=1}^{T}C_k(x^{\pi}_k)$ in \eqref{eq:regret} using \eqref{eq:epoch_time} gives us the required result.
\end{proof}


If $f(M,T)$ is sublinear in the number of epochs $T$, then it is also sublinear in the number of time-slots $MT$ since we assume that $M$ is fixed to be a large constant. Thus, using Lemma \ref{lem:epoch_time}, sublinear epoch regret implies sublinear time-slot regret. As a direct corollary of this, any online algorithm with sublinear static epoch regret achieves an expected time-average AoI cost which is at least as good as that under the best fixed sampling threshold.
\begin{corollary}
\label{corr:1}
    Suppose an online algorithm $\pi$ has an upper bound on its expected static regret that grows sublinearly in $T$. Let $\pi^{*}$ denote the policy corresponding to the best fixed AoI threshold $x^{*}$ given the entire sequence of AoI cost functions $f_1,...,f_T$. Then for any sequence of bounded cost functions the following holds:
    \begin{equation}
    \begin{split}
        \limsup_{T \rightarrow \infty} \frac{1}{MT} \mathbb{E}\bigg[ \sum_{k=1}^{T}\sum_{t \in E_k} f_k(A^{\pi}(t)) + C u^{\pi}(t)  \bigg] \leq \\ \limsup_{T \rightarrow \infty} \frac{1}{MT} \mathbb{E}\bigg[ \sum_{k=1}^{T}\sum_{t \in E_k} f_k(A^{\pi^{*}}(t)) + C u^{\pi^{*}}(t)  \bigg].
    \end{split}
    \end{equation}
\end{corollary}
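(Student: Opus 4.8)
The plan is to reduce the claimed inequality directly to the per-sequence time-slot regret bound already established in Lemma~\ref{lem:epoch_time}, and then pass to the limit. First I would fix an arbitrary bounded sequence of cost functions $f_1,\dots,f_T$. Since $\pi^{*}$ is exactly the fixed-threshold policy that plays $x^{*}$ in every epoch, its cumulative epoch cost equals the in-hindsight minimum $\min_{x}\sum_{k}C_k(x)$. Applying Lemma~\ref{lem:epoch_time} to this particular sequence (the supremum over sequences dominates any single sequence pointwise in the algorithm's randomness, so the bound survives taking expectations) yields that the expected difference of the cumulative time-slot costs of $\pi$ and $\pi^{*}$ is at most $f(M,T)$. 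Abbreviating
\begin{equation*}
a_T \triangleq \frac{1}{MT}\mathbb{E}\Big[\sum_{k=1}^{T}\sum_{t\in E_k} f_k(A^{\pi}(t)) + C u^{\pi}(t)\Big], \qquad b_T \triangleq \frac{1}{MT}\mathbb{E}\Big[\sum_{k=1}^{T}\sum_{t\in E_k} f_k(A^{\pi^{*}}(t)) + C u^{\pi^{*}}(t)\Big],
\end{equation*}
this says precisely that $a_T - b_T \leq f(M,T)/(MT)$.

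Next I would take the limit. Because $M$ is held fixed while $T\to\infty$, sublinearity of $f(M,T)$ in $T$ gives $f(M,T)/(MT)\to 0$, hence $\limsup_{T}(a_T-b_T)\leq 0$. Since the normalized costs are bounded (each $C_k(x)\in[0,1]$ forces $a_T,b_T\in[0,1/M]$), every $\limsup$ involved is finite, so I can safely invoke subadditivity of $\limsup$ by writing $a_T = (a_T-b_T) + b_T$:
\begin{equation*}
\limsup_{T\to\infty} a_T \;\leq\; \limsup_{T\to\infty}(a_T-b_T) + \limsup_{T\to\infty} b_T \;\leq\; 0 + \limsup_{T\to\infty} b_T,
\end{equation*}
which is exactly the claimed inequality.

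The step that requires the most attention is this $\limsup$ manipulation: one cannot simply replace $\limsup(a_T-b_T)$ by $\limsup a_T - \limsup b_T$, and the harmless-looking subadditivity bound is what makes the argument correct. Everything else is bookkeeping — translating epoch costs into time-slot costs via \eqref{eq:epoch_time}, noting that $\pi^{*}$ realizes the in-hindsight minimum so its epoch cost is $\min_{x}\sum_{k}C_k(x)$, and using that $MT$ is the total number of time-slots so normalizing by $MT$ matches the time-average appearing in the statement.
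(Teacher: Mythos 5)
Your proposal is correct and follows essentially the same route as the paper: fix the sequence, invoke Lemma~\ref{lem:epoch_time} to get the per-sequence bound $a_T - b_T \leq f(M,T)/(MT)$, divide by $MT$, and let sublinearity kill the regret term in the limit superior. The only difference is that you make the $\limsup$ subadditivity step explicit, which the paper leaves implicit.
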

\begin{proof}
See Appendix \ref{pf:corr}.
\end{proof}
Note that the relation in Corollary 1 is an inequality and not an equality because we are comparing to the best static threshold policy across epochs and it is possible that our online monitoring policy performs better. 
\section{Multiple Sources}
\label{sec:multiple_sources}
\begin{figure}
	\centering
	\includegraphics[width=0.85\linewidth]{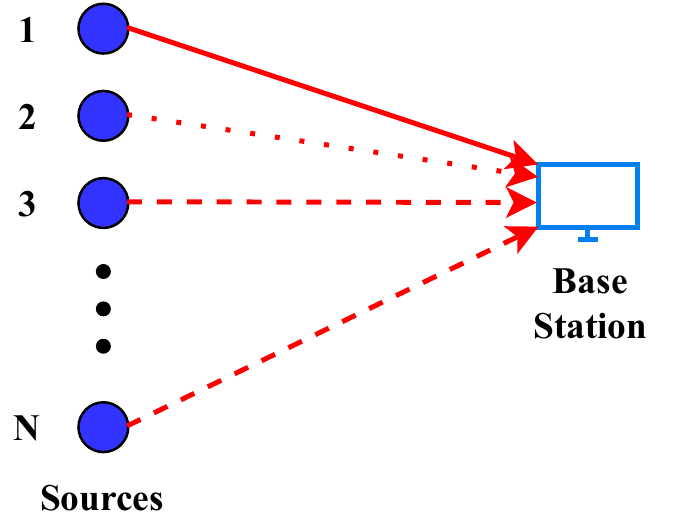}
	\caption{Multiple source monitoring}
	\label{figure:multiple}
\end{figure}
Motivated by the single-source discussion, we study a more challenging problem. Now, multiple sources are sending information to a monitoring station over a network as in Figure \ref{figure:multiple}. In this setting, the scheduler needs to decide which source gets to send an update in every time-slot to optimize for overall monitoring accuracy and performance, and the goal is to learn good scheduling policies.

Consider a system with $N$ sources sending updates over a network such that only one source can transmit at any given time-slot (due to interference/capacity constraints). We assume reliable channels, i.e. when a source is chosen to transmit an update, it is delivered to the monitor without fail in the next time-slot. Freshness aware scheduling in such single-hop wireless networks has been the focus of a lot of recent work in the AoI community \cite{kadota2018scheduling,kadota2018scheduling2,talak2018optimizing,tripathi2017age,farazi2018age,tripathi2019whittle,jhun2018age}. 

We now create an epoch-based structure and set up an online learning formulation for multiple sources as we did in the single source setting. As before, we divide the time horizon into $T$ epochs, where each epoch is of length $M$ time-slots. At the beginning of epoch $k$, the scheduler needs to decide a scheduling policy $\pi_k$ which specifies when to schedule each sensor. Once the epoch is over, a cost of the form $C_k(\pi_k)$ is incurred and a new epoch begins. Using cost information about previous decisions, we again choose a scheduling policy $\pi_{k+1}$ for epoch $k+1$ and the process repeats itself. 

We maintain variables $A^{(1)},...,A^{(N)}$ which track the evolution of AoI for each source within an epoch. The evolution of AoI for source $i$ in epoch $k$ is described by the following equation:	
\begin{equation}
A^{(i)}(j+1) =
\begin{cases}
A^{(i)}(j)+1, & \text{if } i \notin \pi_k(j)  \\
1, &  \text{if } i \in \pi_k(j) ,
\end{cases}
\label{eq:aoi_ev}
\end{equation}
where $j$ is an index denoting the current time-slot within the epoch and $\pi_k(j)$ is the scheduling decision set in time-slot $j$ of epoch $k$. 

Similar to the single-source formulation, we relax the interference constraint in the last time-slot of every epoch. This ensures that the AoI of every source is set to 1 at the beginning of each epoch and we do not need to maintain history of AoI across different epochs. Practically, we justify this as a two time-scale assumption. A scheduling policy remains fixed over an epoch (the longer time-scale) and specifies how to take decisions over time-slots (the shorter time-scale). Once the epoch ends, the system resets. The system designer observes the performance of the scheduling policy that was chosen and specifies a new scheduling policy for the next epoch. 

We consider scheduling policies as a sequence of $M$ scheduling decisions, specifying which source gets to transmit in each time-slot within an epoch. We denote this space of scheduling policies by $\Pi^{M}$. 

We assume that the cost for delayed information in any time-slot can be represented as a general function of the current AoIs.  Let $f_k(A^{(1)},...,A^{(N)})$ represent this AoI cost function in epoch $k$, where $f_k:{\mathbb{Z}^{+}}^{N} \rightarrow [0,D]$ is a \textbf{bounded} mapping from the set of AoI vectors to costs. The total cost of choosing a policy $\pi$ in epoch $k$ is given by
\begin{equation}
\label{eq:finite_cost_g}
\sum\limits_{j=1}^{M}  f_k(A^{(1)}(j),...,A^{(N)}(j)),
\end{equation}
where the AoIs evolve under policy $\pi$ according to \eqref{eq:aoi_ev}.

We have an \textbf{unconstrained adversary} who chooses the sequence of bounded cost functions $f_k(\cdot)$ for each epoch $k$ and we need to learn the best scheduling policy in response to any sequence of cost functions. Without loss of generality we assume that $f_k(\cdot)$ are normalized such that the total cost of any policy $C_k(\cdot)$, given by \eqref{eq:finite_cost_g}, lies in the set $[0,1]$. 


At the end of every epoch, the scheduler receives feedback in terms of $C_k(\cdot)$. In the case of full feedback, the entire function $C_k(\cdot)$ is revealed, meaning cost for all scheduling policies is known when the epoch ends. For the case of bandit feedback, only the cost for the chosen scheduling policy $C_k(\pi_k)$ is revealed.

Observe that the multiple source problem with the feedback structure as set up above can also be viewed as prediction with expert advice. Now, instead of AoI thresholds as experts, we have scheduling policies as experts and our goal is to \textit{compete with the best scheduling policy in hindsight.} 

Thus, we can directly apply online learning algorithms as done in Section \ref{sec:single} to the multiple source setting. The regret bounds, however, are not the same. This is because the number of scheduling policies of length $M$ time-slots scales as $\Theta(N^{M})$. 
\begin{lemma}
		Consider the multiple source online scheduling problem with $N \geq 2$. If an online algorithm $\text{Alg.}$ has an upper bound $f(M,T)$ on its expected regret in the single source setting, then the same algorithm run using scheduling policies as experts for the multiple source problem has the following regret bound:
		\begin{equation*}
		\mathbb{E}[\text{Regret}_T(\text{Alg.})] \leq \bar{C}f(N^{M},T),
		\end{equation*}
		where $\bar{C}>0$ is a constant that does not depend on any other parameters. 
		\label{lem:multiple}
\end{lemma}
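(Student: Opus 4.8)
The plan is to treat the multiple-source problem as exactly the same prediction-with-expert-advice instance as the single-source one, only with a different expert set, and then transport the single-source bound by substituting the correct number of experts. The crucial reinterpretation is that the quantity appearing as the first argument of $f$ is really the \emph{number of experts}: in the single-source setting the experts are the $M$ thresholds, so the hypothesis ``Alg.\ has regret bound $f(M,T)$'' should be read as ``Alg.\ has regret bound $f(n,T)$ on any $n$-expert problem run for $T$ epochs'' (this is how Theorems~\ref{thm:FTPL_regret} and \ref{thm:EXP3_regret} are actually stated). In the multiple-source setting an expert is a scheduling policy $\pi\in\Pi^{M}$, i.e.\ an assignment of a transmitting source to each slot of the epoch, and the number of such policies is $K=\Theta(N^{M})$. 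Since the costs $C_k(\cdot)$ are again normalized to $[0,1]$ and the feedback (full or bandit) is identical, running Alg.\ with these $K$ policies as experts is a legitimate instance of the same problem, so its expected regret is at most $f(K,T)$.

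The remaining work is to pass from $f(K,T)=f(\Theta(N^{M}),T)$ to $\bar{C}\,f(N^{M},T)$, which I would do by absorbing the $\Theta$-constant into $\bar C$ using the explicit form of the regret function. Writing $K\le c\,N^{M}$ for the universal constant $c$ coming from the policy count, and using $N\ge 2$, $M\ge 1$ so that $\log(N^{M})=M\log N\ge\log 2>0$, one gets $\log(cN^{M})=\log c+\log(N^{M})\le\bigl(1+\tfrac{\log c}{\log 2}\bigr)\log(N^{M})$; that is, a constant-factor change in the expert count changes the log of the count by at most a constant factor. Feeding this into the two bounds of interest yields the claim: for the FTPL bound $f(n,T)=2\sqrt{2T\log n}$ one obtains $f(cN^{M},T)\le\sqrt{1+\log c/\log 2}\;f(N^{M},T)$, and for the EXP3 bound $f(n,T)=2\sqrt{Tn\log n}$ one obtains $f(cN^{M},T)\le\sqrt{c\,(1+\log c/\log 2)}\;f(N^{M},T)$. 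In both cases $\bar C$ depends only on $c$ and not on $N$, $M$, or $T$, as required.

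I expect the second step, not the reduction, to be the main obstacle. The reduction is essentially bookkeeping, but the inequality $f(\Theta(N^{M}),T)\le\bar{C}\,f(N^{M},T)$ needs a regularity property of the regret function, namely that a constant-factor rescaling of the number of experts only rescales the bound by a constant. This is exactly where the hypothesis $N\ge 2$ enters: it keeps $\log(N^{M})$ bounded away from $0$ so the additive $\log c$ is dominated. It is also the reason the statement is phrased with a generic $\bar C$ rather than a closed form: the property must be checked for each concrete bound, and one should note the mild asymmetry that the $\log$-only dependence of FTPL is robust even to looser counts such as $(N+1)^{M}$, whereas the EXP3 bound, which depends on the count through $\sqrt{n\log n}$, genuinely requires $K=O(N^{M})$ with a constant factor, which is precisely what $\Theta(N^{M})$ supplies.
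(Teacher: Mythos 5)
Your proof is correct and follows essentially the same route the paper relies on: the paper in fact states this lemma without a formal proof, appealing only to the preceding observation that the multiple-source problem is the same prediction-with-expert-advice instance with the $\Theta(N^{M})$ scheduling policies playing the role of the $M$ thresholds. Your extra care in absorbing the $\Theta$-constant into $\bar{C}$ --- using $N\geq 2$ so that $\log(N^{M})\geq\log 2>0$ and checking the rescaling property separately for the FTPL and EXP3 bounds --- makes explicit the one step the paper glosses over.
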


We observe that while the dependence of regret on $T$ remains the same, it becomes exponentially worse in $M$ for the multiple source setting. This also highlights a key computational challenge in the multiple source setting. The number of policies scales exponentially with $M$, the length of an epoch. Thus the optimization step in FTPL (Algorithm \ref{alg:FTPL}) has computational complexity that scales exponentially with $M$. Similar computational challenges are faced in implementing exponential weight algorithms like EXP3 for the bandit feedback case of the multiple source setting. This makes it hard to implement these online scheduling schemes in practice. 

This is not surprising, given that the offline problem of finding the best scheduling policy of length $M$ time-slots in the setting with cost functions known beforehand also requires computation that scales as $O(N^{M})$ (see \cite{jhun2018age}). In \cite{tripathi2019whittle}, the authors analyzed the setting where cost functions can be represented as sums of separate cost functions that depend only on the AoI of each source individually. If the individual cost functions of AoI for each source are monotone increasing, then a low complexity heuristic based on the Whittle index approach can be found which is nearly optimal. We will use this observation to design low complexity online policies that keep track of the best scheduling policy in hindsight.
\subsection{Online Whittle-Index Scheduling}
We modify the general multiple source setting so as to solve the computational challenge discussed above. 

First, we consider scheduling policies as mappings from the set of AoI vectors $A^{(1)},...,A^{(N)}$ to the set of sources, i.e. $\pi:{\mathbb{Z}^{+}}^{N} \rightarrow \{1,..N\}$. Given the AoIs of all sources at time-slot $j$ within an epoch, a policy $\pi$ specifies which source gets to transmit. We denote this space of scheduling policies by $\Pi$. 

Second, we assume that the cost function splits as a sum of individual cost functions of AoI, where $f_k^{(1)},...,f_k^{(N)}$ represent individual AoI cost functions for each source in epoch $k$. Then, the total cost of choosing a policy $\pi$ in epoch $k$ is given by
\begin{equation}
	\label{eq:split_cost}
	C_k(\pi) = \frac{1}{NM}\sum\limits_{j=1}^{M} \sum\limits_{i=1}^{N} f_k^{(i)}(A^{(i)}(j)),
\end{equation}
where the AoIs evolve under policy $\pi$ according to \eqref{eq:aoi_ev}. We multiply a normalizing constant $\frac{1}{NM}$ to the sum AoI cost to make regret analysis neater. 

We assume that the cost functions $f_k^{(i)}:\mathbb{Z}^{+}\rightarrow\mathbb{R}^{+}$ are fixed during an epoch and bounded monotone increasing functions of AoI, i.e. if $x > y$ then $f_k^{(i)}(x) \geq f_k^{(i)}(y)$ and $f_k^{(i)}(\cdot) \leq D$. An unconstrained adversary is free to change these bounded cost functions arbitrarily across epochs.

Finally, instead of receiving feedback directly in terms of cost of scheduling policies $C_k:\Pi^{M}\rightarrow[0,1]$, we consider feedback in terms of individual cost functions of AoI. So, at the end of epoch $k$, a cost $C_k(\pi)$ is incurred (given by \eqref{eq:split_cost}) and AoI cost functions $f_k^{(1)},...,f_k^{(N)}$ are revealed to the scheduler, either completely or partially. In the case of bandit feedback, we will construct estimates of the entire cost functions $\hat{f}_k^{(1)},...,\hat{f}_k^{(N)}$.

Note that within an epoch, the scheduling problem that we want to solve is an instantiation of the functions of age problem described in \cite{tripathi2019whittle}.

We briefly review the multiple source setting with fixed AoI cost functions studied in \cite{tripathi2019whittle}. Consider $N$ sources and a given set of increasing AoI cost functions $f^{(1)},...,f^{(N)}$. Our goal is to minimize average age cost over an infinite horizon. The Whittle index policy maps the current vector of source AoIs to a scheduling decision. If the current AoI for source $i$ is $A^{(i)}$ then the Whittle policy is given by
\begin{equation}
\pi^{\text{Whittle}}(A^{(1)},...,A^{(N)}) \triangleq \argmax_{i \in \{1,...,N\}} \{ W^{(i)}(A^{(i)}) \},
\label{eq:whittle_policy}
\end{equation} 
where 
\begin{equation*}
	W^{(i)}(x) \triangleq x f^{(i)}(x+1) - \sum_{k=1}^{x} f^{(i)}(k)	
\end{equation*} are Whittle index functions. It was shown in \cite{tripathi2019whittle} that this Whittle policy is optimal for $N=2$ and near optimal in general. For cost functions $f^{(1)},...,f^{(N)}$, we denote the Whittle policy given by \eqref{eq:whittle_policy} as $\text{Whittle} \big(f^{(1)},...,f^{(N)}\big)$. Next, we describe how to design a low-complexity online algorithm using Whittle index policies.

\subsubsection{Full Feedback}
\label{sec:multiple_full}
In this setting, we assume that the entire $M$ dimensional AoI cost function $f_k^{(i)}$ for each source $i$ is revealed to the scheduler at the end of the epoch. Instead of looking for the best schedule in every epoch which is computationally expensive, we will use the Whittle index policy as an approximate minimizer. This leads to Algorithm \ref{alg:FTWL}, which we call \textit{Follow the Perturbed Whittle Leader} (FPWL).

\begin{algorithm}
	\DontPrintSemicolon
	\SetKwInOut{Input}{Input}\SetKwInOut{Output}{Output}
	\Input{parameter $\epsilon > 0$}
	\BlankLine
	Set $F_1^{(i)}(j) = j, \forall i \in \{1,...,N\}, \forall j \in \{1,...,M\}$ \\		
	\While{ $t \in 1,...,T$ }{
		Set $A^{(1)},...,A^{(N)} = \mathbf{1}$\;
		Sample $\delta_t^{(i)}(j) \sim \text{ uniform in } [0,1/\epsilon], \text{ i.i.d. }\forall i \in \{1,...,N\} \text{ and } \forall j \in \{1,...,M\} $\;
		Compute $\gamma_t^{(i)}(j) = \sum_{k=1}^{j} \delta_t^{(i)}(j), \forall i,j$ \;
		Choose scheduling policy $\pi_t = \text{Whittle} \bigg(F_t^{(1)} + \gamma_t^{(1)},...,F_t^{(N)} + \gamma_t^{(N)}\bigg)$\;
		Incur loss = $C_t(\pi_t)$ over epoch $t$ and observe feedback on $f_t^{(1)},...,f_t^{(N)}$\;
		In case of bandit feedback, construct cost estimates $\hat{f}_t^{(i)}, \forall i \in \{1,...,N\}$ using linear interpolation\;
		Update
		\begin{equation*}
		F_{t+1}^{(i)} =
		\begin{cases}
		F_{t}^{(i)} + f_t^{(i)}, \forall i \in \{1,...,N\}, \text{ if full feedback }  \\
		F_{t}^{(i)} + \hat{f}_t^{(i)}, \forall i \in \{1,...,N\}, \text{ if bandit feedback.}
		\end{cases}
		\end{equation*}				
	}		
	\caption{Follow the Perturbed Whittle Leader}
	\label{alg:FTWL}
\end{algorithm}

FPWL can be divided into three major steps. First, accumulate the entire history of cost functions that the scheduler has seen until the current epoch in $F_t^{(1)},...,F_t^{(N)}$. Since cost functions in each epoch are increasing in terms of AoI, their sums $F_t^{(i)}$ are also increasing. Second, perturb these accumulated cost functions in a manner such that they remain increasing functions of AoI but are still amenable for FTPL style analysis. Third, instead of computing the best possible scheduling policy for these accumulated and perturbed cost functions, use the Whittle index policy as an approximate minimizer.

Computing the Whittle policy has complexity $O(NM)$ since it involves a maximization over $N$ quantities for at most $M$ steps. Further, generating the random perturbations $\gamma_t$ in steps 4 and 5 also takes at most $O(NM)$ computation. Thus, the algorithm above resolves the computational challenge involved in implementing FTPL directly for the online scheduling problem.

Proving regret bounds our proposed algorithm is much harder than in the single or multiple source settings studied earlier. We overcome three significant  problems: 1) perturbations in Algorithm \ref{alg:FTWL} are made to the AoI cost functions rather than policies, unlike regular FTPL; 2) because of this, cost perturbations are not i.i.d. across policies; 3) the Whittle index policy is only an approximate minimizer rather than an exact minimizer of the average AoI cost. Despite these challenges, we are able to show that FPWL achieves low regret compared to any fixed scheduling policy, if the Whittle policy is ``close" to the actual optimal policy. Theorem \ref{thm:regret_multi} describes an upper bound on the regret of FPWL when compared to the best fixed scheduling policy in hindsight. The parameter $\alpha$ measures the closeness between the Whittle policy and an optimal policy. For a detailed definition of $\alpha$ see Appendix \ref{pf:alpha}.
\begin{theorem}
		Follow the perturbed Whittle leader (FPWL) based scheduling described by Algorithm \ref{alg:FTWL} with $\epsilon = \sqrt{\frac{2M}{N D^2 T}}$ achieves the following upper bound on expected regret:
		\begin{equation*}
		\mathbb{E}[\text{Regret}_T(\text{FPWL})] \leq \alpha T + 2D\sqrt{2MNT},
		\end{equation*}
		where the expectation is taken over the random perturbations.
		\label{thm:regret_multi}
\end{theorem}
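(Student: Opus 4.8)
The plan is to reduce the analysis to a standard Follow-the-Perturbed-Leader (FTPL) argument by peeling off the suboptimality of the Whittle index as a separate, cleanly accumulating term. Write the played policy as $\pi_t = \text{Whittle}(F_t^{(1)}+\gamma_t^{(1)},\dots,F_t^{(N)}+\gamma_t^{(N)})$ and introduce the \emph{exact} perturbed leader $\hat\pi_t \in \argmin_{\pi\in\Pi}\frac{1}{NM}\sum_{j=1}^{M}\sum_{i=1}^{N}\big(F_t^{(i)}+\gamma_t^{(i)}\big)\big(A_\pi^{(i)}(j)\big)$, the true minimizer over $\Pi$ of the same perturbed accumulated cost that Whittle only approximately minimizes (with $A_\pi^{(i)}(j)$ the AoI of source $i$ at slot $j$ under $\pi$). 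First I would decompose
\begin{equation*}
\text{Regret}_T(\text{FPWL}) = \underbrace{\sum_{t=1}^{T}\big(C_t(\pi_t)-C_t(\hat\pi_t)\big)}_{\text{Whittle gap}} + \underbrace{\Big(\sum_{t=1}^{T} C_t(\hat\pi_t) - \min_{\pi\in\Pi}\sum_{t=1}^{T} C_t(\pi)\Big)}_{\text{exact-FTPL regret}},
\end{equation*}
so that the two sources of error are separated.

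For the first term I would invoke the near-optimality of the Whittle index from \cite{tripathi2019whittle}: since each $F_t^{(i)}+\gamma_t^{(i)}$ is increasing in AoI (both $F_t^{(i)}$ and the cumulative-uniform perturbation $\gamma_t^{(i)}$ are increasing), the Whittle policy is an approximate minimizer of the perturbed accumulated cost, and $\alpha$ (defined in Appendix~\ref{pf:alpha}) is exactly the per-epoch bound on how much its cost can exceed that of $\hat\pi_t$. Summing over the $T$ epochs yields the $\alpha T$ contribution. The delicate point here is relating the Whittle guarantee, which is naturally stated for the \emph{accumulated} cost functions that Whittle actually sees, to the gap measured under the single-epoch cost $C_t$; this is precisely what the definition of $\alpha$ is set up to do.

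For the exact-FTPL regret I would run the perturbed-leader analysis of \cite{cohen2015following} already used for Theorem~\ref{thm:FTPL_regret}, adapted to the fact that the perturbation acts on the $NM$-dimensional cost-function representation rather than on policies. The effective perturbation of a policy's cost, $\frac{1}{NM}\sum_{j}\sum_{i}\gamma_t^{(i)}(A_\pi^{(i)}(j))$, splits the regret into a \textbf{diameter} term and a \textbf{stability} term. The diameter is controlled by the maximal perturbation: each $\gamma_t^{(i)}(j)\le M/\epsilon$, so after normalization the perturbed objective ranges over an interval of width $O(M/\epsilon)$, contributing a term of order $2M/\epsilon$. The stability term bounds $\sum_t \mathbb{E}[C_t(\hat\pi_t)-C_t(\hat\pi_{t+1})]$ using the bounded density $\epsilon$ of the uniform increments together with the pointwise bound $f_t^{(i)}\le D$; adding one epoch's cost moves the smoothed leader by an amount controlled by $\epsilon$, and carrying a dimension factor $N$ together with a factor $D^2$ (the cost magnitude enters quadratically) gives $O(ND^2\epsilon)$ per epoch, hence $ND^2\epsilon T$ in total. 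Balancing $2M/\epsilon + ND^2\epsilon T$ at $\epsilon = \sqrt{2M/(ND^2T)}$ gives $2\sqrt{(2M)(ND^2)T} = 2D\sqrt{2MNT}$, and adding the Whittle gap yields the claimed bound.

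The main obstacle is the stability step. Unlike textbook FTPL, (i) the perturbations are not independent across policies — two policies visiting overlapping AoI levels share the same $\gamma_t^{(i)}$ values — so I cannot invoke an i.i.d.-across-experts smoothing argument, and (ii) the leader whose stability I must control is defined through the average-AoI-cost objective rather than a simple linear loss. I would therefore carry out the stability estimate at the level of the per-source, per-AoI-level increments $\delta_t^{(i)}(j)$, which \emph{are} i.i.d.\ uniform, showing that the minimizer of the perturbed accumulated cost is insensitive to a single epoch's update; the approximate (Whittle) nature of the minimizer is deliberately kept out of this step because it was already charged to the $\alpha T$ term. Making the density/stability bound compatible with the cumulative structure of $\gamma_t^{(i)}$ is the crux of the argument.
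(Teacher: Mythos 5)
Your argument is correct and rests on the same two pillars as the paper's proof: Assumption~\ref{ass:whittle_close} to charge $\alpha$ per epoch for using Whittle in place of an exact optimizer, and the Kalai--Vempala machinery --- a $2M/\epsilon$ diameter term from the maximal perturbation, plus a stability term obtained by applying the cube-overlap lemma to the i.i.d.\ increments $\delta_t^{(i)}(j)$ (precisely the device the paper uses to sidestep the correlation of $\gamma_t^{(i)}$ across AoI levels), which yields $\epsilon N D^2$ per epoch from $|f_t'^{(i)}|_1 \le D$ and a worst-case cost gap of $D$ --- balanced at the same $\epsilon$. The only real difference is bookkeeping: the paper runs a three-stage chain (Be-the-Whittle-Leader, then its perturbed version, then FPWL), charging $\alpha$ at the unperturbed be-the-leader stage via an induction over exact optimizers of running averages and then absorbing the perturbation by reinterpreting $f_k+\gamma_k-\gamma_{k-1}$ as a modified cost sequence, whereas you peel off the Whittle-versus-exact-perturbed-leader gap per epoch up front and then analyze exact FTPL over $\Pi$. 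Your ordering isolates the oracle's approximation error more cleanly in a single step, and you correctly note that Assumption~\ref{ass:whittle_close} applies to the perturbed accumulated costs because they remain monotone and Whittle/Opt are invariant to positive rescaling; quantitatively, however, it buys nothing --- both routes need the same supporting observations and land on the identical bound.
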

\begin{proof}
See Appendix \ref{pf:w_regret}.
\end{proof}

It was proved in \cite{tripathi2019whittle} that the Whittle index policy is optimal for $N=2$, meaning $\alpha=0$ and we can achieve sublinear regret with respect to the best fixed scheduling policy when there are $2$ sources. Further, recent work in \cite{maatouk2020optimality} suggests that $\alpha \rightarrow 0$ as $N \rightarrow \infty$ meaning that FPWL can achieve sublinear regret for large system sizes. Simulations in both \cite{tripathi2019whittle} and \cite{maatouk2020optimality} indicate that $\alpha$ is very small for most problems of practical interest.

Importantly, note that there is no way to get sublinear static regret by using FPWL if the Whittle solution is not exactly optimal for the offline problem. In this case, even if the cost functions are the same in every epoch, there would be a small gap $\alpha > 0$ between the cost of the Whittle policy and the optimal policy in every epoch. The small constant gap will add up to give linear regret. Thus, the term $\alpha T$ in the regret bound above accounts for this cost of using an approximate optimization oracle rather than an exact one, and cannot be eliminated.



\subsubsection{Dynamic Regret}
A drawback of the online learning formulation is that sublinear regret is only possible when comparing to a simple class of policies since there are no constraints on the adversary choosing the cost functions. A more general notion of regret is \textit{dynamic regret} where cost is compared to an algorithm which chooses the best scheduling policy in \textit{each} epoch rather than the best fixed policy across epochs. Dynamic regret of an algorithm that chooses scheduling policy $\pi_k$ in epoch $k$ is defined as follows:
	\begin{equation}
	\label{eq:dyn_regret}
	\text{D-Regret}_T(\text{Alg.}, \mathcal{C}) \triangleq \sup_{C_{1,..,T} \in \mathcal{C}} \bigg\{  \sum_{k=1}^{T} C_k({\pi}_k) - \sum_{k=1}^{T} \min_{ \pi \in \Pi }  C_k(\pi) \bigg\},
	\end{equation}
where $\mathcal{C}$ incorporates constraints on the adversary. It is easy to show that if there are no constraints on how an adversary is allowed to choose the cost functions $C_1, ..., C_T$ then achieving sublinear dynamic regret is not possible. Thus, the definition of dynamic regret includes $\mathcal{C}$ which is the class of cost function sequences over which the regret is being considered and incorporates constraints on the adversary. 

A number of recent works on online learning consider the problem of minimizing dynamic regret by constraining how the sequence of cost functions change over time (see \cite{besbes2015non,jadbabaie2015online,besbes2019optimal,cheung2019learning}). We follow the approach of \cite{besbes2015non} and \cite{besbes2019optimal} by defining the quantity ${V}_T$ which measures the variation of a given sequence of cost functions as follows:
\begin{equation}
	\sum_{k=2}^{T} \max_{\pi} \big|C_{k-1}(\pi) - C_k(\pi)\big| \leq V_T.
	\label{eq:mult_vt}
\end{equation}
Suppose we know that any sequence of cost functions chosen by the adversary is going to satisfy the inequality \eqref{eq:mult_vt}. Then, we denote the set of allowable sequence of cost functions by $\mathcal{C}(V_T)$ and define the quantity $V_T$ as the variation budget given to the adversary.


We can also use the Whittle index approach to achieve low dynamic regret. If $V_T$ is known to be sublinear in $T$ beforehand, then simply using the Whittle index policy for the cost functions revealed in the previous epoch is sufficient to get low dynamic regret. Specifically, set $f_0^{(i)} = \{1,...,M\}, \forall i \in \{1,...,N\}$ and let the scheduling policy in epochs $k$ be given by:
\begin{equation}
\label{eq:fdwl}
\pi_k = \text{Whittle} \bigg(f_{k-1}^{(1)},...,f_{k-1}^{(N)}\bigg).
\end{equation}
We call this algorithm \textit{Follow the Dynamic Whittle Leader} (FDWL).
\begin{lemma}
		The dynamic regret of FDWL satisfies
		\begin{equation*}
		\text{D-Regret}_T(\text{FDWL},\mathcal{C}(V_T)) \leq \alpha T + V_T + D,
		\end{equation*}
		where $V_T$ is the variation budget as defined in \eqref{eq:mult_vt} and $D$ is the upper-bound on AoI cost functions.
		\label{lem:dyn_mult} 
\end{lemma}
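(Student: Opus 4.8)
The plan is to bound the per-epoch gap $C_k(\pi_k)-\min_{\pi}C_k(\pi)$ and sum over epochs, exploiting the fact that FDWL's policy $\pi_k=\text{Whittle}\big(f_{k-1}^{(1)},\dots,f_{k-1}^{(N)}\big)$ is tailored to the \emph{previous} epoch. Write $V_k^{*}\triangleq\min_{\pi\in\Pi}C_k(\pi)$ for the offline optimum of epoch $k$ and $v_k\triangleq\max_{\pi}|C_{k-1}(\pi)-C_k(\pi)|$, so that $\sum_{k=2}^{T}v_k\le V_T$ by \eqref{eq:mult_vt}. Since the supremum is over $\mathcal{C}(V_T)$, it suffices to bound the deterministic quantity $\sum_{k=1}^{T}\big(C_k(\pi_k)-V_k^{*}\big)$ for every admissible sequence.

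First I would establish the one-epoch inequality. Because $\pi_k$ is the Whittle policy for the cost functions $f_{k-1}$, the definition of $\alpha$ (Appendix \ref{pf:alpha}) guarantees that its cost \emph{evaluated under those same functions} is within $\alpha$ of optimal, i.e. $C_{k-1}(\pi_k)\le V_{k-1}^{*}+\alpha$. Bridging from epoch $k-1$ to epoch $k$ for the fixed policy $\pi_k$ then costs a single variation step, $C_k(\pi_k)\le C_{k-1}(\pi_k)+v_k$. Combining the two yields, for every $k\ge 2$,
\begin{equation*}
C_k(\pi_k)\le V_{k-1}^{*}+\alpha+v_k .
\end{equation*}
The crucial point is that comparing $\pi_k$ against the \emph{previous} epoch's optimum $V_{k-1}^{*}$ (rather than against $V_k^{*}$) spends the variation budget only once.

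Next I would telescope. For $k\ge 2$, subtracting $V_k^{*}$ gives $C_k(\pi_k)-V_k^{*}\le (V_{k-1}^{*}-V_k^{*})+\alpha+v_k$; summing from $k=2$ to $T$, the optimal values telescope to $V_1^{*}-V_T^{*}\le V_1^{*}$ (using $V_T^{*}\ge 0$), the $\alpha$ terms contribute $\alpha(T-1)$, and the variation terms are bounded by $V_T$. The boundary epoch $k=1$ is kept exact and combined with this sum, so that its $-V_1^{*}$ cancels the stray $V_1^{*}$ above; bounding the remaining $C_1(\pi_1)\le D$ (all normalized costs lie in $[0,D]$) leaves
\begin{equation*}
\sum_{k=1}^{T}\big(C_k(\pi_k)-V_k^{*}\big)\le D+\alpha(T-1)+V_T\le \alpha T+V_T+D ,
\end{equation*}
which is the claimed bound after taking the supremum over $\mathcal{C}(V_T)$.

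The main obstacle is driving the coefficient of $V_T$ down to one. A direct argument that compares $\pi_k$ to the current epoch's optimum $\pi_k^{*}$ would chain two variation inequalities (from epoch $k$ to $k-1$ and back), producing $2V_T$; the tight bound instead requires the telescoping device above, in which FDWL's lag-one structure is matched to the previous epoch's optimum so that the sequence $\{V_k^{*}\}$ collapses. A secondary subtlety is invoking the appendix definition of $\alpha$ in the correct (normalized, under-$f_{k-1}$) units and correctly absorbing the boundary epoch $k=1$, where $\pi_1$ is built from the fixed initialization $f_0$ and carries no meaningful approximation guarantee, forcing the crude $D$ term.
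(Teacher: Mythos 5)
Your proof is correct and yields exactly the paper's bound, but it routes the per-epoch argument through a different intermediate point than the paper does, and this difference is substantive. The paper's proof uses the exact algebraic decomposition $C_k(\pi_k)-V_k^{*}=\big(C_k(\pi_k)-C_k(\pi_{k-1}^{*})\big)+\big(C_k(\pi_{k-1}^{*})-C_{k-1}(\pi_{k-1}^{*})\big)+\big(V_{k-1}^{*}-V_k^{*}\big)$ where $\pi_{k-1}^{*}=\argmin_{\pi}C_{k-1}(\pi)$; its $\alpha$-step is the \emph{cross-cost} comparison $C_{f_k}(\text{Whittle}(f_{k-1}))-C_{f_k}(\text{Opt}(f_{k-1}))\leq\alpha$, which invokes Assumption \ref{ass:whittle_close} with $g=f_k\neq f=f_{k-1}$ — precisely the "Lipschitz-like" strengthening that the appendix emphasizes goes beyond mere near-optimality of the Whittle index. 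You instead pass through $C_{k-1}(\pi_k)$: first $C_{k-1}(\text{Whittle}(f_{k-1}))\leq V_{k-1}^{*}+\alpha$, which is Assumption \ref{ass:whittle_close} in the special case $g=f$, i.e. only the statement that the Whittle policy is $\alpha$-suboptimal for its own cost functions, and then a single variation step $C_k(\pi_k)\leq C_{k-1}(\pi_k)+v_k$. The two routes give the identical per-epoch bound $C_k(\pi_k)\leq V_{k-1}^{*}+\alpha+v_k$ and the same telescoping and boundary accounting ($C_1(\pi_1)\leq D$, $V_T^{*}\geq 0$), but yours establishes the lemma under a strictly weaker hypothesis on the Whittle index, which is a small but genuine improvement in the argument; the paper's version is the one that generalizes directly to the static-regret analysis of Theorem \ref{thm:regret_multi}, where the cross-cost form of the assumption is unavoidable.
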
	
\begin{proof}
See Appendix \ref{pf:dyn_mult}.
\end{proof}

An important point to note here is that FDWL should only be used when an upper bound on $V_T$ that grows sublinearly with $T$ is known \textit{a priori}. If no such upper bound is known and $V_T$ grows linearly with $T$, then it can be shown that FDWL incurs static regret that is linear in $T$ meaning it performs worse than FPWL (Algorithm \ref{alg:FTWL}). This neatly splits the full feedback setting into two regimes. If $V_T$ is known to be sublinear use FDWL to get sublinear dynamic regret. Otherwise, use the entire history of cost functions as in FPWL to get sublinear static regret.

Algorithm \ref{alg:FTWL} also highlights the strength of follow-the-leader style algorithms in solving online optimization problems with combinatorial structure. If a low complexity solution is known to the offline problem as with the Whittle index then it can be incorporated into FTPL as an optimization oracle. On the other hand, exponential weight update based algorithms like EXP3 \cite{auer2002nonstochastic} or EXP3.S \cite{besbes2019optimal} are standard in the bandit feedback case. Incorporating a Whittle index solution directly in these algorithms is not possible. This makes designing computationally efficient online learning algorithms for bandit feedback harder in the multiple source setting. We develop a heuristic solution for this below.

\subsubsection{Bandit Feedback}
For bandit feedback, the cost function of AoI associated with source $i$ is only revealed during the time-slots in which it sends an update.
 Specifically, if at time-slot $j$ within epoch $k$ the policy $\pi_k$ schedules sensor $i$, then $f_k^{(i)}(A^{(i)}(j))$ is revealed to the scheduler. This happens for every time-slot in the epoch. 

To run FPWL and FDWL on this incomplete feedback we need to construct estimates of the cost functions denoted by $\hat{f}_k^{(1)},...,\hat{f}_k^{(N)}$. We do this by linearly interpolating between the revealed values of $f_k^{(i)}$ for each source $i$. Algorithm \ref{alg:c_estimate} describes the details. Importantly, constructing the linear interpolating cost estimates for a single source requires a single pass over $1,...,M$. Thus, constructing $\hat{f}_k^{(1)},...,\hat{f}_k^{(N)}$ has a computational complexity $O(NM)$. So, our modified versions of FPWL and FDWL for bandit feedback remain computationally efficient. However, since these estimates are not guaranteed to be unbiased, regret analysis in the bandit feedback case becomes challenging. 

\begin{algorithm}
	\DontPrintSemicolon
	\SetKwInOut{Input}{Input}\SetKwInOut{Output}{Output}
	\Input{$X \subseteq \{1,...,M\}$ for which $f_k^{(i)}$ is known, $D$ known upper bound on $f_k^{(i)}$}
	\Output{Estimate $\hat{f}_k^{(i)}$ that is an increasing AoI cost function}
	\BlankLine
	Add $0$ to $X$ and set $f_k^{(i)}(0) = 0$\; 
	\If{$M \notin X$}{
		set $f_k^{(i)}(M) = D$ and add $M$ to $X$\;
	}
	Sort $X$ in increasing order $\{0,x_1,...,x_l,M\}$\; 		
	\While{ $h \in 1,...,M$ }{
		\eIf{$h \notin X$}{
		Find $k$ such that $x_k < h < x_{k+1}$ and $x_k,x_{k+1} \in X$\;
		Set $\hat{f}_k^{(i)}(h) = f_k^{(i)}(x_k) + (h-x_k)\frac{f_k^{(i)}(x_{k+1}) - f_k^{(i)}(x_k)}{x_{k+1} - x_k}$
		}{
		Set $\hat{f}_k^{(i)}(h) = f_k^{(i)}(h)$. } 				
	}		
	\caption{Linearly Interpolating Cost Function Estimate for source $i$}
	\label{alg:c_estimate}
\end{algorithm}

\section{Mobility Tracking}
\label{sec:applications}
We now apply the results we have developed to a mobility tracking problem. Consider $N$ nodes moving around in the two dimensional plane whose positions needs to be tracked by a central base station (BS). At any given time, only one of these nodes can send an update about its current location and velocity to the BS. The BS keeps track of the location of the nodes by storing the most recently received update from each node. Our goal is to design a scheduling policy that minimizes total tracking error between the location estimates at the BS and the actual locations of the nodes. 

Observe that if the current velocity of a node $i$ is $v_i$, then its tracking error grows linearly with its AoI. That is, if the BS hasn't received an update from node $i$ for time $A_i$, then the tracking error is $v_i A_i$. In practical scenarios, node mobility patterns and velocities are often unknown beforehand, non-stationary, and possibly adversarial. Thus, our mobility tracking problem can be viewed as a weighted-AoI minimization problem with time-varying velocities acting as weights. Since a new update from a node only contains information about its current location and velocity, \textit{this fits into the multiple source bandit feedback setting}.

We will discuss two specific mobility models and apply our online algorithms to show that they outperform static AoI based scheduling. Note that while cost functions being static within an epoch and resetting of AoIs at the beginning of every epoch are necessary for regret analysis, these assumptions are not required to implement our algorithms in practice. 
\subsection{Levy Mobility}
 In this scenario, we simulate the nodes' motion using Levy mobility. This is a realistic mobility model that closely matches human mobility in practice \cite{rhee2011levy}. A node's motion is described in a series of \textit{steps}. A step is represented by the tuple $(v,\theta,t_f,t_p)$ - a velocity $v$ picked randomly in the interval $[0,v_{\text{max}}]$, an angle $\theta$ picked uniformly from the interval $[0,2\pi]$, a flight time $t_f$ picked at random from $\{1,...,{T_f}_{\text{max}}\}$ and a pause time $t_p$ picked at random from $\{1,...,{T_p}_{\text{max}}\}$. The node then moves with the velocity $v$, in the direction $\theta$ for $t_f$ time-slots and then pauses at its location for $t_p$ time-slots. This leads to a bursty random walk pattern with time-varying velocities.

\begin{figure}
	\centering
	\includegraphics[width=0.99\linewidth]{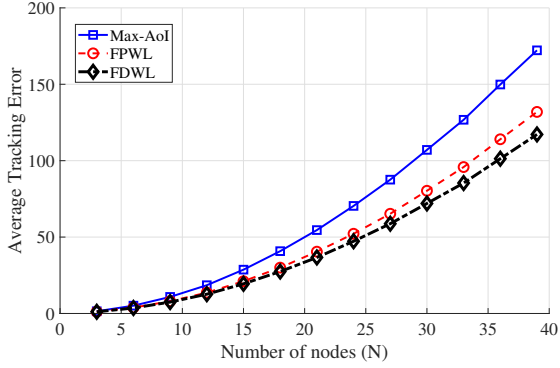}
	\caption{Levy Mobility: Average Tracking Error v/s number of nodes}
	\label{figure:levy}
\end{figure}
We consider $N$ nodes executing Levy mobility. An adversary sets the values of $v_{\text{max}}$ for each node from the set $\{0.1,0.5,5\}$ designating it as a slow, medium or fast node. Overall, $N/3$ nodes each are designated as fast, medium and slow, but the scheduler doesn't know which. We set ${T_f}_{\text{max}} = 50$ and ${T_p}_{\text{max}}=30$ for all nodes.

The scheduler does not know beforehand that there is inherent asymmetry in the motion of the nodes. An oblivious static scheduling policy is max-AoI: let the node with the maximum AoI transmit in every time-slot. From Figure \ref{figure:levy}, we observe that using FPWL in this setting outperforms the max-AoI scheduling policy (by about 25\%). Further, FDWL outperforms both max-AoI (by about 33\%) and FPWL (by about 10\%). This is because velocities under Levy mobility are slowly varying in time and not adversarial, allowing a dynamic regret based algorithm such as FDWL to work better than FPWL. We set the epoch length $M$ to 200 time-slots for both FPWL and FDWL, and the number of epochs $T$ to $500$, thus running the simulation for 100000 time-slots.

\subsection{Adversarial Mobility}
In this scenario, we assume that the nodes execute a mobility pattern that is chosen by a reactive adversary in response to the scheduling policies. In every epoch, the nodes execute Brownian motion (moving in random directions at a fixed velocity). An adversary assigns velocities to nodes such that they are inversely proportional to their scheduling priorities. 

For FPWL, the scheduling policy in epoch $t$ is given by $\pi_t = \text{Whittle} \big(F_t^{(1)} + \gamma_t^{(1)},...,F_t^{(N)} + \gamma_t^{(N)}\big)$. So, the velocity $v_t^{(i)}$ of node $i$ in epoch $t$ is chosen to satisfy 
$v_k^{(i)} \propto  c^{(i)}{||F_t^{(i)}||}^{-2}.$ Similarly, for FDWL, the scheduling policy in epoch $t$ is given by $\pi_t = \text{Whittle} \big(\hat{f}_{t-1}^{(1)},...,\hat{f}_{t-1}^{(N)}\big)$, where we use estimated cost functions since our setting involves bandit feedback. So, $v_t^{(i)}$ is chosen to satisfy
$v_k^{(i)} \propto  c^{(i)}{||\hat{f}_{t-1}^{(i)}||}^{-2}.$ For the max-AoI policy, the velocity $v_t^{(i)}$ is chosen to satisfy $v_k^{(i)} \propto  c^{(i)}.$ Here $c^{(i)}$ are parameters which are fixed across epochs and also chosen by the adversary to ensure that the motion of nodes has inherent asymmetry unknown to the scheduler. Overall, $N/3$ nodes each are assigned $c^{(i)} = 0.1$, $c^{(i)} = 0.4$ and $c^{(i)} = 40$. If the scheduler observes a node was moving fast in the previous epochs and assigns it a larger cost, then the adversary assigns it a low velocity in the next epoch so as to confuse the scheduler. The sum total of velocities is normalized and remains fixed in every time-slot ensuring that the \textit{adversary is equally powerful irrespective of scheduling policies}.
\begin{figure}
	\centering
	\includegraphics[width=0.93\linewidth]{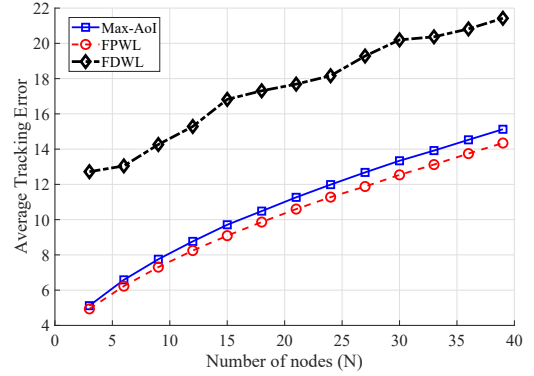}
	\caption{Adversarial Mobility: Average Tracking Error v/s number of nodes}
	\label{figure:adversarial}
\end{figure}

Under this adversarial model, we observe in Figure \ref{figure:adversarial} that while FPWL still outperforms max-AoI (by about 8\%), FDWL performs significantly worse than both FPWL and max-AoI (about 50\% worse). This is consistent with our results from theory - when cost functions are quickly varying and adversaries are unconstrained and reactive, dynamic regret based algorithms like FDWL perform worse than static regret algorithms like FPWL. 
\section{Conclusion}
\label{sec:conclusion}
In this work, we have formulated a general framework for online monitoring and scheduling for non-stationary sources. Specifically, we handle unknown, time-varying, and possibly adversarial cost functions of AoI and design algorithms that attempt to learn the best scheduling policies in an online fashion. We apply our results to a mobility tracking problem and show that our online learning algorithms outperform oblivious AoI based schemes and are able to learn information about the underlying source dynamics. 

Possible directions of future work involve applying our online scheduling framework to different problems of practical interest, and incorporating unreliable channels and noisy feedback about the costs into our framework.


\bibliographystyle{ACM-Reference-Format}
\bibliography{bibliography}


\begin{thebibliography}{38}


\ifx \showCODEN    \undefined \def \showCODEN     #1{\unskip}     \fi
\ifx \showDOI      \undefined \def \showDOI       #1{#1}\fi
\ifx \showISBNx    \undefined \def \showISBNx     #1{\unskip}     \fi
\ifx \showISBNxiii \undefined \def \showISBNxiii  #1{\unskip}     \fi
\ifx \showISSN     \undefined \def \showISSN      #1{\unskip}     \fi
\ifx \showLCCN     \undefined \def \showLCCN      #1{\unskip}     \fi
\ifx \shownote     \undefined \def \shownote      #1{#1}          \fi
\ifx \showarticletitle \undefined \def \showarticletitle #1{#1}   \fi
\ifx \showURL      \undefined \def \showURL       {\relax}        \fi
\providecommand\bibfield[2]{#2}
\providecommand\bibinfo[2]{#2}
\providecommand\natexlab[1]{#1}
\providecommand\showeprint[2][]{arXiv:#2}

\bibitem[\protect\citeauthoryear{Auer, Cesa-Bianchi, Freund, and Schapire}{Auer
  et~al\mbox{.}}{2002}]%
        {auer2002nonstochastic}
\bibfield{author}{\bibinfo{person}{Peter Auer}, \bibinfo{person}{Nicolo
  Cesa-Bianchi}, \bibinfo{person}{Yoav Freund}, {and} \bibinfo{person}{Robert~E
  Schapire}.} \bibinfo{year}{2002}\natexlab{}.
\newblock \showarticletitle{The nonstochastic multiarmed bandit problem}.
\newblock \bibinfo{journal}{\emph{SIAM J. Comput.}} \bibinfo{volume}{32},
  \bibinfo{number}{1} (\bibinfo{year}{2002}), \bibinfo{pages}{48--77}.
\newblock


\bibitem[\protect\citeauthoryear{Banerjee, Bhattacharjee, and Sinha}{Banerjee
  et~al\mbox{.}}{2020}]%
        {banerjee2020fundamental}
\bibfield{author}{\bibinfo{person}{Subhankar Banerjee},
  \bibinfo{person}{Rajarshi Bhattacharjee}, {and} \bibinfo{person}{Abhishek
  Sinha}.} \bibinfo{year}{2020}\natexlab{}.
\newblock \showarticletitle{Fundamental limits of age-of-information in
  stationary and non-stationary environments}.
\newblock \bibinfo{journal}{\emph{arXiv preprint arXiv:2001.05471}}
  (\bibinfo{year}{2020}).
\newblock


\bibitem[\protect\citeauthoryear{Bedewy, Sun, and Shroff}{Bedewy
  et~al\mbox{.}}{2019}]%
        {bedewy2019minimizing}
\bibfield{author}{\bibinfo{person}{Ahmed~M Bedewy}, \bibinfo{person}{Yin Sun},
  {and} \bibinfo{person}{Ness~B Shroff}.} \bibinfo{year}{2019}\natexlab{}.
\newblock \showarticletitle{Minimizing the age of information through queues}.
\newblock \bibinfo{journal}{\emph{IEEE Trans. Information Theory}}
  \bibinfo{volume}{65}, \bibinfo{number}{8} (\bibinfo{year}{2019}),
  \bibinfo{pages}{5215--5232}.
\newblock


\bibitem[\protect\citeauthoryear{Besbes, Gur, and Zeevi}{Besbes
  et~al\mbox{.}}{2015}]%
        {besbes2015non}
\bibfield{author}{\bibinfo{person}{Omar Besbes}, \bibinfo{person}{Yonatan Gur},
  {and} \bibinfo{person}{Assaf Zeevi}.} \bibinfo{year}{2015}\natexlab{}.
\newblock \showarticletitle{Non-stationary stochastic optimization}.
\newblock \bibinfo{journal}{\emph{Operations research}} \bibinfo{volume}{63},
  \bibinfo{number}{5} (\bibinfo{year}{2015}), \bibinfo{pages}{1227--1244}.
\newblock


\bibitem[\protect\citeauthoryear{Besbes, Gur, and Zeevi}{Besbes
  et~al\mbox{.}}{2019}]%
        {besbes2019optimal}
\bibfield{author}{\bibinfo{person}{Omar Besbes}, \bibinfo{person}{Yonatan Gur},
  {and} \bibinfo{person}{Assaf Zeevi}.} \bibinfo{year}{2019}\natexlab{}.
\newblock \showarticletitle{Optimal exploration--exploitation in a multi-armed
  bandit problem with non-stationary rewards}.
\newblock \bibinfo{journal}{\emph{Stochastic Systems}} \bibinfo{volume}{9},
  \bibinfo{number}{4} (\bibinfo{year}{2019}), \bibinfo{pages}{319--337}.
\newblock


\bibitem[\protect\citeauthoryear{Bhandari, Fatale, Narula, Moharir, and
  Hanawal}{Bhandari et~al\mbox{.}}{2020}]%
        {bhandari2020age}
\bibfield{author}{\bibinfo{person}{Kavya Bhandari}, \bibinfo{person}{Santosh
  Fatale}, \bibinfo{person}{Urvidh Narula}, \bibinfo{person}{Sharayu Moharir},
  {and} \bibinfo{person}{Manjesh~Kumar Hanawal}.}
  \bibinfo{year}{2020}\natexlab{}.
\newblock \showarticletitle{Age-of-Information Bandits}.
\newblock \bibinfo{journal}{\emph{arXiv preprint arXiv:2001.09317}}
  (\bibinfo{year}{2020}).
\newblock


\bibitem[\protect\citeauthoryear{Cesa-Bianchi and Lugosi}{Cesa-Bianchi and
  Lugosi}{2006}]%
        {cesa2006prediction}
\bibfield{author}{\bibinfo{person}{Nicolo Cesa-Bianchi} {and}
  \bibinfo{person}{G{\'a}bor Lugosi}.} \bibinfo{year}{2006}\natexlab{}.
\newblock \bibinfo{booktitle}{\emph{Prediction, learning, and games}}.
\newblock \bibinfo{publisher}{Cambridge university press}.
\newblock


\bibitem[\protect\citeauthoryear{Champati, Mamduhi, Johansson, and
  Gross}{Champati et~al\mbox{.}}{2019}]%
        {champati2019performance}
\bibfield{author}{\bibinfo{person}{Jaya~Prakash Champati},
  \bibinfo{person}{Mohammad~H Mamduhi}, \bibinfo{person}{Karl~H Johansson},
  {and} \bibinfo{person}{James Gross}.} \bibinfo{year}{2019}\natexlab{}.
\newblock \showarticletitle{Performance characterization using aoi in a
  single-loop networked control system}. In \bibinfo{booktitle}{\emph{Proc.
  IEEE INFOCOM AoI Workshop}}. \bibinfo{pages}{197--203}.
\newblock


\bibitem[\protect\citeauthoryear{Cheung, Simchi-Levi, and Zhu}{Cheung
  et~al\mbox{.}}{2019}]%
        {cheung2019learning}
\bibfield{author}{\bibinfo{person}{Wang~Chi Cheung}, \bibinfo{person}{David
  Simchi-Levi}, {and} \bibinfo{person}{Ruihao Zhu}.}
  \bibinfo{year}{2019}\natexlab{}.
\newblock \showarticletitle{Learning to optimize under non-stationarity}. In
  \bibinfo{booktitle}{\emph{Proc. Int. Conf. Artificial Intell. Stats.
  (AISTATS)}}. \bibinfo{pages}{1079--1087}.
\newblock


\bibitem[\protect\citeauthoryear{Cohen and Hazan}{Cohen and Hazan}{2015}]%
        {cohen2015following}
\bibfield{author}{\bibinfo{person}{Alon Cohen} {and} \bibinfo{person}{Tamir
  Hazan}.} \bibinfo{year}{2015}\natexlab{}.
\newblock \showarticletitle{Following the perturbed leader for online
  structured learning}. In \bibinfo{booktitle}{\emph{Proc. Int. Conf. Machine
  Learning (ICML)}}. \bibinfo{pages}{1034--1042}.
\newblock


\bibitem[\protect\citeauthoryear{Farazi, Klein, McNeill, and Brown}{Farazi
  et~al\mbox{.}}{2018}]%
        {farazi2018age}
\bibfield{author}{\bibinfo{person}{Shahab Farazi}, \bibinfo{person}{Andrew~G
  Klein}, \bibinfo{person}{John~A McNeill}, {and} \bibinfo{person}{D~Richard
  Brown}.} \bibinfo{year}{2018}\natexlab{}.
\newblock \showarticletitle{On the age of information in multi-source multi-hop
  wireless status update networks}. In \bibinfo{booktitle}{\emph{Proc. IEEE
  Int. Workshop Signal Process. Adv. Wireless Commun. (SPAWC)}}.
  \bibinfo{pages}{1--5}.
\newblock


\bibitem[\protect\citeauthoryear{Hannan}{Hannan}{1957}]%
        {hannan1957approximation}
\bibfield{author}{\bibinfo{person}{James Hannan}.}
  \bibinfo{year}{1957}\natexlab{}.
\newblock \showarticletitle{Approximation to Bayes risk in repeated play}.
\newblock \bibinfo{journal}{\emph{Contributions to the Theory of Games}}
  \bibinfo{volume}{3} (\bibinfo{year}{1957}), \bibinfo{pages}{97--139}.
\newblock


\bibitem[\protect\citeauthoryear{Hazan}{Hazan}{2019}]%
        {hazan2019introduction}
\bibfield{author}{\bibinfo{person}{Elad Hazan}.}
  \bibinfo{year}{2019}\natexlab{}.
\newblock \showarticletitle{Introduction to online convex optimization}.
\newblock \bibinfo{journal}{\emph{arXiv preprint arXiv:1909.05207}}
  (\bibinfo{year}{2019}).
\newblock


\bibitem[\protect\citeauthoryear{Huang and Modiano}{Huang and Modiano}{2015}]%
        {huang2015optimizing}
\bibfield{author}{\bibinfo{person}{Longbo Huang} {and} \bibinfo{person}{Eytan
  Modiano}.} \bibinfo{year}{2015}\natexlab{}.
\newblock \showarticletitle{Optimizing age-of-information in a multi-class
  queueing system}. In \bibinfo{booktitle}{\emph{Proc. IEEE Int. Symp.
  Information Theory (ISIT)}}. \bibinfo{pages}{1681--1685}.
\newblock


\bibitem[\protect\citeauthoryear{Inoue, Masuyama, Takine, and Tanaka}{Inoue
  et~al\mbox{.}}{2018}]%
        {inoue2018general}
\bibfield{author}{\bibinfo{person}{Yoshiaki Inoue}, \bibinfo{person}{Hiroyuki
  Masuyama}, \bibinfo{person}{Tetsuya Takine}, {and} \bibinfo{person}{Toshiyuki
  Tanaka}.} \bibinfo{year}{2018}\natexlab{}.
\newblock \showarticletitle{A general formula for the stationary distribution
  of the age of information and its application to single-server queues}.
\newblock \bibinfo{journal}{\emph{arXiv preprint arXiv:1804.06139}}
  (\bibinfo{year}{2018}).
\newblock


\bibitem[\protect\citeauthoryear{Jadbabaie, Rakhlin, Shahrampour, and
  Sridharan}{Jadbabaie et~al\mbox{.}}{2015}]%
        {jadbabaie2015online}
\bibfield{author}{\bibinfo{person}{Ali Jadbabaie}, \bibinfo{person}{Alexander
  Rakhlin}, \bibinfo{person}{Shahin Shahrampour}, {and}
  \bibinfo{person}{Karthik Sridharan}.} \bibinfo{year}{2015}\natexlab{}.
\newblock \showarticletitle{Online optimization: Competing with dynamic
  comparators}. In \bibinfo{booktitle}{\emph{Proc. Int. Conf. Artificial
  Intell. Stats. (AISTATS)}}. \bibinfo{pages}{398--406}.
\newblock


\bibitem[\protect\citeauthoryear{Jhunjhunwala and Moharir}{Jhunjhunwala and
  Moharir}{2018}]%
        {jhun2018age}
\bibfield{author}{\bibinfo{person}{Prakirt~Raj Jhunjhunwala} {and}
  \bibinfo{person}{Sharayu Moharir}.} \bibinfo{year}{2018}\natexlab{}.
\newblock \showarticletitle{Age-of-Information Aware Scheduling}. In
  \bibinfo{booktitle}{\emph{Proc. IEEE SPCOM}}.
\newblock


\bibitem[\protect\citeauthoryear{Kadota, Sinha, and Modiano}{Kadota
  et~al\mbox{.}}{2019}]%
        {kadota2018scheduling2}
\bibfield{author}{\bibinfo{person}{Igor Kadota}, \bibinfo{person}{Abhishek
  Sinha}, {and} \bibinfo{person}{Eytan Modiano}.}
  \bibinfo{year}{2019}\natexlab{}.
\newblock \showarticletitle{Scheduling algorithms for optimizing age of
  information in wireless networks with throughput constraints}.
\newblock \bibinfo{journal}{\emph{IEEE/ACM Trans. Netw.}} \bibinfo{volume}{27},
  \bibinfo{number}{4} (\bibinfo{year}{2019}), \bibinfo{pages}{1359--1372}.
\newblock


\bibitem[\protect\citeauthoryear{Kadota, Sinha, Uysal-Biyikoglu, Singh, and
  Modiano}{Kadota et~al\mbox{.}}{2018}]%
        {kadota2018scheduling}
\bibfield{author}{\bibinfo{person}{Igor Kadota}, \bibinfo{person}{Abhishek
  Sinha}, \bibinfo{person}{Elif Uysal-Biyikoglu}, \bibinfo{person}{Rahul
  Singh}, {and} \bibinfo{person}{Eytan Modiano}.}
  \bibinfo{year}{2018}\natexlab{}.
\newblock \showarticletitle{Scheduling policies for minimizing age of
  information in broadcast wireless networks}.
\newblock \bibinfo{journal}{\emph{IEEE/ACM Trans. Netw.}} \bibinfo{volume}{26},
  \bibinfo{number}{6} (\bibinfo{year}{2018}), \bibinfo{pages}{2637--2650}.
\newblock


\bibitem[\protect\citeauthoryear{Kalai and Vempala}{Kalai and Vempala}{2005}]%
        {kalai2005efficient}
\bibfield{author}{\bibinfo{person}{Adam Kalai} {and} \bibinfo{person}{Santosh
  Vempala}.} \bibinfo{year}{2005}\natexlab{}.
\newblock \showarticletitle{Efficient algorithms for online decision problems}.
\newblock \bibinfo{journal}{\emph{J. Comput. System Sci.}}
  \bibinfo{volume}{71}, \bibinfo{number}{3} (\bibinfo{year}{2005}),
  \bibinfo{pages}{291--307}.
\newblock


\bibitem[\protect\citeauthoryear{Kam, Kompella, and Ephremides}{Kam
  et~al\mbox{.}}{2013}]%
        {kam2013age}
\bibfield{author}{\bibinfo{person}{Clement Kam}, \bibinfo{person}{Sastry
  Kompella}, {and} \bibinfo{person}{Anthony Ephremides}.}
  \bibinfo{year}{2013}\natexlab{}.
\newblock \showarticletitle{Age of information under random updates}. In
  \bibinfo{booktitle}{\emph{Proc. IEEE Int. Symp. Information Theory (ISIT)}}.
  \bibinfo{pages}{66--70}.
\newblock


\bibitem[\protect\citeauthoryear{Kam, Kompella, and Ephremides}{Kam
  et~al\mbox{.}}{2019}]%
        {kam2019learning}
\bibfield{author}{\bibinfo{person}{Clement Kam}, \bibinfo{person}{Sastry
  Kompella}, {and} \bibinfo{person}{Anthony Ephremides}.}
  \bibinfo{year}{2019}\natexlab{}.
\newblock \showarticletitle{Learning to sample a signal through an unknown
  system for minimum aoi}. In \bibinfo{booktitle}{\emph{Proc. IEEE INFOCOM AoI
  Workshop}}. \bibinfo{pages}{177--182}.
\newblock


\bibitem[\protect\citeauthoryear{Kaul, Yates, and Gruteser}{Kaul
  et~al\mbox{.}}{2012}]%
        {kaul2012real}
\bibfield{author}{\bibinfo{person}{Sanjit Kaul}, \bibinfo{person}{Roy Yates},
  {and} \bibinfo{person}{Marco Gruteser}.} \bibinfo{year}{2012}\natexlab{}.
\newblock \showarticletitle{Real-time status: How often should one update?}. In
  \bibinfo{booktitle}{\emph{Proc. IEEE INFOCOM}}. \bibinfo{pages}{2731--2735}.
\newblock


\bibitem[\protect\citeauthoryear{Kl{\"u}gel, Mamduhi, Hirche, and
  Kellerer}{Kl{\"u}gel et~al\mbox{.}}{2019}]%
        {klugel2019aoi}
\bibfield{author}{\bibinfo{person}{Markus Kl{\"u}gel},
  \bibinfo{person}{Mohammad~H Mamduhi}, \bibinfo{person}{Sandra Hirche}, {and}
  \bibinfo{person}{Wolfgang Kellerer}.} \bibinfo{year}{2019}\natexlab{}.
\newblock \showarticletitle{Aoi-penalty minimization for networked control
  systems with packet loss}. In \bibinfo{booktitle}{\emph{Proc. IEEE INFOCOM
  AoI Workshop}}. \bibinfo{pages}{189--196}.
\newblock


\bibitem[\protect\citeauthoryear{Kosta, Pappas, Angelakis, et~al\mbox{.}}{Kosta
  et~al\mbox{.}}{2017a}]%
        {kosta2017age}
\bibfield{author}{\bibinfo{person}{Antzela Kosta}, \bibinfo{person}{Nikolaos
  Pappas}, \bibinfo{person}{Vangelis Angelakis}, {et~al\mbox{.}}}
  \bibinfo{year}{2017}\natexlab{a}.
\newblock \showarticletitle{Age of information: A new concept, metric, and
  tool}.
\newblock \bibinfo{journal}{\emph{Foundations and Trends in Networking}}
  \bibinfo{volume}{12}, \bibinfo{number}{3} (\bibinfo{year}{2017}),
  \bibinfo{pages}{162--259}.
\newblock


\bibitem[\protect\citeauthoryear{Kosta, Pappas, Ephremides, and
  Angelakis}{Kosta et~al\mbox{.}}{2017b}]%
        {kosta2017nlage}
\bibfield{author}{\bibinfo{person}{Antzela Kosta}, \bibinfo{person}{Nikolaos
  Pappas}, \bibinfo{person}{Anthony Ephremides}, {and}
  \bibinfo{person}{Vangelis Angelakis}.} \bibinfo{year}{2017}\natexlab{b}.
\newblock \showarticletitle{Age and value of information: Non-linear age case}.
  In \bibinfo{booktitle}{\emph{Proc. IEEE Int. Symp. Information Theory
  (ISIT)}}. \bibinfo{pages}{326--330}.
\newblock


\bibitem[\protect\citeauthoryear{Maatouk, Kriouile, Assaad, and
  Ephremides}{Maatouk et~al\mbox{.}}{2020}]%
        {maatouk2020optimality}
\bibfield{author}{\bibinfo{person}{Ali Maatouk}, \bibinfo{person}{Saad
  Kriouile}, \bibinfo{person}{Mohamad Assaad}, {and} \bibinfo{person}{Anthony
  Ephremides}.} \bibinfo{year}{2020}\natexlab{}.
\newblock \showarticletitle{On The Optimality of The Whittle's Index Policy For
  Minimizing The Age of Information}.
\newblock \bibinfo{journal}{\emph{arXiv preprint arXiv:2001.03096}}
  (\bibinfo{year}{2020}).
\newblock


\bibitem[\protect\citeauthoryear{Ornee and Sun}{Ornee and Sun}{2019}]%
        {ornee2019sampling}
\bibfield{author}{\bibinfo{person}{Tasmeen~Zaman Ornee} {and}
  \bibinfo{person}{Yin Sun}.} \bibinfo{year}{2019}\natexlab{}.
\newblock \showarticletitle{Sampling for remote estimation through queues: Age
  of information and beyond}.
\newblock \bibinfo{journal}{\emph{IEEE Int. Symp. Model. Optim. Mobile, Ad Hoc
  Wireless Netw. (WiOpt)}} (\bibinfo{year}{2019}).
\newblock


\bibitem[\protect\citeauthoryear{Rhee, Shin, Hong, Lee, Kim, and Chong}{Rhee
  et~al\mbox{.}}{2011}]%
        {rhee2011levy}
\bibfield{author}{\bibinfo{person}{Injong Rhee}, \bibinfo{person}{Minsu Shin},
  \bibinfo{person}{Seongik Hong}, \bibinfo{person}{Kyunghan Lee},
  \bibinfo{person}{Seong~Joon Kim}, {and} \bibinfo{person}{Song Chong}.}
  \bibinfo{year}{2011}\natexlab{}.
\newblock \showarticletitle{On the levy-walk nature of human mobility}.
\newblock \bibinfo{journal}{\emph{IEEE/ACM Trans. Netw.}} \bibinfo{volume}{19},
  \bibinfo{number}{3} (\bibinfo{year}{2011}), \bibinfo{pages}{630--643}.
\newblock


\bibitem[\protect\citeauthoryear{Sun and Cyr}{Sun and Cyr}{2019}]%
        {sun2019sampling}
\bibfield{author}{\bibinfo{person}{Yin Sun} {and} \bibinfo{person}{Benjamin
  Cyr}.} \bibinfo{year}{2019}\natexlab{}.
\newblock \showarticletitle{Sampling for data freshness optimization:
  Non-linear age functions}.
\newblock \bibinfo{journal}{\emph{IEEE Journal Commun. Netw.}}
  \bibinfo{volume}{21}, \bibinfo{number}{3} (\bibinfo{year}{2019}),
  \bibinfo{pages}{204--219}.
\newblock


\bibitem[\protect\citeauthoryear{Sun, Kadota, Talak, and Modiano}{Sun
  et~al\mbox{.}}{2019}]%
        {sun2019age_book}
\bibfield{author}{\bibinfo{person}{Yin Sun}, \bibinfo{person}{Igor Kadota},
  \bibinfo{person}{Rajat Talak}, {and} \bibinfo{person}{Eytan Modiano}.}
  \bibinfo{year}{2019}\natexlab{}.
\newblock \showarticletitle{Age of information: A new metric for information
  freshness}.
\newblock \bibinfo{journal}{\emph{Synthesis Lectures on Communication
  Networks}} \bibinfo{volume}{12}, \bibinfo{number}{2} (\bibinfo{year}{2019}),
  \bibinfo{pages}{1--224}.
\newblock


\bibitem[\protect\citeauthoryear{Sun, Polyanskiy, and Uysal-Biyikoglu}{Sun
  et~al\mbox{.}}{2017a}]%
        {sun2017remote}
\bibfield{author}{\bibinfo{person}{Yin Sun}, \bibinfo{person}{Yury Polyanskiy},
  {and} \bibinfo{person}{Elif Uysal-Biyikoglu}.}
  \bibinfo{year}{2017}\natexlab{a}.
\newblock \showarticletitle{Remote estimation of the Wiener process over a
  channel with random delay}. In \bibinfo{booktitle}{\emph{Proc. IEEE Int.
  Symp. Information Theory (ISIT)}}. \bibinfo{pages}{321--325}.
\newblock


\bibitem[\protect\citeauthoryear{Sun, Uysal-Biyikoglu, Yates, Koksal, and
  Shroff}{Sun et~al\mbox{.}}{2017b}]%
        {yin17_tit_update_or_wait}
\bibfield{author}{\bibinfo{person}{Y. Sun}, \bibinfo{person}{E.
  Uysal-Biyikoglu}, \bibinfo{person}{R.~D. Yates}, \bibinfo{person}{C.~E.
  Koksal}, {and} \bibinfo{person}{N.~B. Shroff}.}
  \bibinfo{year}{2017}\natexlab{b}.
\newblock \showarticletitle{Update or Wait: How to Keep Your Data Fresh}.
\newblock \bibinfo{journal}{\emph{IEEE Trans. Information Theory}}
  \bibinfo{volume}{63}, \bibinfo{number}{11} (\bibinfo{date}{Nov.}
  \bibinfo{year}{2017}), \bibinfo{pages}{7492--7508}.
\newblock


\bibitem[\protect\citeauthoryear{Talak, Karaman, and Modiano}{Talak
  et~al\mbox{.}}{2018}]%
        {talak2018optimizing}
\bibfield{author}{\bibinfo{person}{Rajat Talak}, \bibinfo{person}{Sertac
  Karaman}, {and} \bibinfo{person}{Eytan Modiano}.}
  \bibinfo{year}{2018}\natexlab{}.
\newblock \showarticletitle{Optimizing information freshness in wireless
  networks under general interference constraints}. In
  \bibinfo{booktitle}{\emph{Proc. ACM Int. Symp. Mobile Ad Hoc Netw. Comput.
  (MobiHoc)}}. \bibinfo{pages}{61--70}.
\newblock


\bibitem[\protect\citeauthoryear{Tripathi and Modiano}{Tripathi and
  Modiano}{2019}]%
        {tripathi2019whittle}
\bibfield{author}{\bibinfo{person}{Vishrant Tripathi} {and}
  \bibinfo{person}{Eytan Modiano}.} \bibinfo{year}{2019}\natexlab{}.
\newblock \showarticletitle{A whittle index approach to minimizing functions of
  age of information}. In \bibinfo{booktitle}{\emph{Proc. 57th Allerton Conf.
  Commun. Control Comput.}} IEEE, \bibinfo{pages}{1160--1167}.
\newblock


\bibitem[\protect\citeauthoryear{Tripathi and Moharir}{Tripathi and
  Moharir}{2017}]%
        {tripathi2017age}
\bibfield{author}{\bibinfo{person}{Vishrant Tripathi} {and}
  \bibinfo{person}{Sharayu Moharir}.} \bibinfo{year}{2017}\natexlab{}.
\newblock \showarticletitle{Age of information in multi-source systems}. In
  \bibinfo{booktitle}{\emph{Proc. IEEE Global Commun. Conf. (GLOBECOM)}}.
  \bibinfo{pages}{1--6}.
\newblock


\bibitem[\protect\citeauthoryear{Yun, Joo, and Eryilmaz}{Yun
  et~al\mbox{.}}{2018}]%
        {yun2018optimal}
\bibfield{author}{\bibinfo{person}{Jihyeon Yun}, \bibinfo{person}{Changhee
  Joo}, {and} \bibinfo{person}{Atilla Eryilmaz}.}
  \bibinfo{year}{2018}\natexlab{}.
\newblock \showarticletitle{Optimal real-time monitoring of an information
  source under communication costs}. In \bibinfo{booktitle}{\emph{IEEE Conf.
  Decis. Control (CDC)}}. \bibinfo{pages}{4767--4772}.
\newblock


\bibitem[\protect\citeauthoryear{Zheng, Zhou, and Niu}{Zheng
  et~al\mbox{.}}{2019}]%
        {zheng2019context}
\bibfield{author}{\bibinfo{person}{Xi Zheng}, \bibinfo{person}{Sheng Zhou},
  {and} \bibinfo{person}{Zhisheng Niu}.} \bibinfo{year}{2019}\natexlab{}.
\newblock \showarticletitle{Context-aware information lapse for timely status
  updates in remote control systems}. In \bibinfo{booktitle}{\emph{Proc. IEEE
  Global Commun. Conf. (GLOBECOM)}}. \bibinfo{pages}{1--6}.
\newblock


\end{thebibliography}

\appendix
\section{Proof of Lemma \ref{lem:single_epoch}}
\label{pf:single_epoch}
Let the AoI cost function in epoch $k$ be $f_k(\cdot)$, let the transmission cost be $C$ and let the chosen sampling threshold be $x$. We set $t=1$ at the beginning of the epoch. Then,
\begin{equation}
    C_k(x) = \sum_{t=1}^{M} f_k(A(t)) + C u(t).
\end{equation}

Note that the AoI at time $t=1$ is $A(1) = 1$, since each epoch begins after a new transmission. Since the threshold is set to $x$, no new update is sent till time-slot $x$ at which point the AoI reaches $x$. Now, a new sample is generated and sent, so the AoI drops to 1 in the next time-slot. This process repeats in cycles of $x$ time-slots. Since the epoch consists of $M$ time-slots, there are $\lfloor \frac{M}{x} \rfloor$ complete cycles of length $x$. The sum of costs over each of these cycles is $\big(\sum_{j=1}^{x}f_k(j) + C\big)$ since the AoI goes from 1 to $x$ and there is a transmission at the end.

The final cycle is of length $r = M\mod x$ where $a\mod b$ is the remainder when $a$ is divided by $b$. There is a mandatory transmission in the final time-slot regardless of the AoI exceeding the threshold to finish the epoch. Thus,  
\begin{equation}
    C_k(x) = \bigg\lfloor \frac{M}{x} \bigg\rfloor \bigg(\sum_{j=1}^{x}f_k(j) + C\bigg) + \vmathbb{1}_{r > 0} \bigg(\sum_{j=1}^{r}f_k(j) + C\bigg).
\end{equation}
This completes the proof.

\section{Proof of Corollary \ref{corr:1}}
\label{pf:corr}
Let the regret of algorithm $\pi$ be $f(M,T)$. From Lemma \ref{lem:epoch_time}, we know that for any bounded sequence $f_1,...,f_T$ 
    \begin{multline*}
        \mathbb{E}\Bigg[ \bigg\{  \sum_{k=1}^{T} \sum_{t \in E_k} f_k(A^{\pi}(t)) + C u^{\pi}(t) -\\ \sum_{k=1}^{T} \sum_{t \in E_k} f_k(A^{\pi^{*}}(t)) + C u^{\pi^{*}}(t) \bigg\}\Bigg] \leq f(M,T).
    \end{multline*}
Dividing the equation about by $MT$, we get
\begin{multline*}
    \frac{1}{MT} \mathbb{E}\bigg[ \sum_{k=1}^{T}\sum_{t \in E_k} f_k(A^{\pi}(t)) + C u^{\pi}(t)  \bigg] \leq \\ \frac{1}{MT} \mathbb{E}\bigg[ \sum_{k=1}^{T}\sum_{t \in E_k} f_k(A^{\pi^{*}}(t)) + C u^{\pi^{*}}(t)  \bigg] + \frac{f(M,T)}{MT}.
\end{multline*}
Taking the limit supremum as $T$ goes to infinity and using the fact that $f(M,T)$ grows sublinearly in $T$, we get the required result.

\section{Closeness of Whittle and Optimal Policies}
\label{pf:alpha}
Here, we define $\alpha$, the parameter that measures the closeness of the Whittle index policy to an optimal policy within an epoch.

Consider a set of monotone and bounded AoI cost functions $f^{(1)},...,f^{(N)}$ such that for all $i$, if $x > y$ then $f^{(i)}(x) \geq f^{(i)}(y)$ and $f^{(i)}(M) \leq D$. Let $\text{Whittle}(f)$ denote the Whittle policy for this set of cost functions, as defined in \eqref{eq:whittle_policy}. Let $\text{Opt}(f)$ denote an optimal policy for this set of cost functions.

Now consider another set of monotone bounded AoI cost functions $g^{(1)},...,g^{(N)}$ with the same upper bound $D$. Given a scheduling policy $\pi$, let 
\begin{equation}
C_g(\pi) \triangleq
\frac{1}{NM}\sum\limits_{j=1}^{M} \sum\limits_{i=1}^{N} g^{(i)}(A^{(i)}(j)),
\end{equation}
where the AoIs evolve under policy $\pi$. This is the total sum cost of policy $\pi$ under the cost functions $g^{(1)},...,g^{(N)}$. We make the following assumption on the structure of Whittle index and optimal policies when the epoch length $M$ is long.
	\begin{assumption}
		For any two sets of bounded monotone sets of cost functions $f^{(1)},...,f^{(N)}$ and $g^{(1)},...,g^{(N)}$ with a fixed known upper bound $D$, the following holds:
		\begin{equation}
		\bigg|C_g\big(\text{Whittle}(f)\big) - C_g\big(\text{Opt}(f)\big)\bigg| \leq \alpha,
		\end{equation}
		where $\alpha$ is a small constant that can depend on $N$, $M$ and $D$.	
		\label{ass:whittle_close}	
	\end{assumption}

Note that this assumption is stronger than just assuming that the Whittle index policy has near optimal performance over long epochs. We assume that the Whittle policy is also close to the optimal policy in its sequence of scheduling decisions. Thus, given arbitrary bounded cost functions, the two policies $C_g\big(\text{Whittle}(f)\big)$ and $C_g\big(\text{Opt}(f)\big)$  have average costs that are close to each other. This is a Lipschitz like assumption on the policy space and cost functions for the scheduling problem. The motivation for this comes from results in \cite{tripathi2019whittle}, where it was shown that the Whittle policy is exactly optimal for $N=2$ as $M \rightarrow \infty$, meaning that we can set $\alpha = 0$. It was also observed via simulations that the Whittle policies are structurally similar to optimal policies for general $N$. Results on asymptotic optimality of the Whittle policy \cite{maatouk2020optimality} further suggest that $\alpha \rightarrow 0$ as $N \rightarrow \infty$.

\section{Proof of Theorem \ref{thm:regret_multi}}
\label{pf:w_regret}
Suppose $f_k^{(1)},...,f_k^{(N)}$ are the AoI cost functions during epoch $k$. In each epoch, the cost functions $f_k^{(i)}:\{1,...,M\}\rightarrow\mathbb{R}^{+}$ are  bounded monotone increasing functions of AoI, i.e. if $x > y$ then $f_k^{(i)}(x) \geq f_k^{(i)}(y)$ and $f_k^{(i)}(\cdot) \leq D$. $D$ is fixed and known beforehand. Let $C_k(\pi)$ be the cost incurred in epoch $k$ by using scheduling policy $\pi$, given by \eqref{eq:split_cost}. For a set of cost functions $f^{(1)},...,f^{(N)}$, the Whittle scheduling policy is represented by $\text{Whittle}(f^{(1)},...,f^{(N)})$. For the same set of cost functions, an optimal scheduling policy is represented by $\text{Opt}(f^{(1)},...,f^{(N)})$. We will use these notations throughout the proof. 

Similar to \cite{kalai2005efficient}, we will divide our proof into three steps.
\subsection{Be-the-Whittle-Leader has low regret}

First, we define a hypothetical algorithm called Be-the-Whittle-Leader (BWL). In epoch $k$, a scheduling policy $\pi_k^{\text{BWL}}$ is chosen as follows:
\begin{equation}
	\pi_k^{\text{BWL}} = \text{Whittle} \bigg(\sum_{t=1}^{k} f_t^{(1)},...,\sum_{t=1}^{k} f_t^{(N)}\bigg).
\end{equation}
BWL applies the Whittle procedure to the sum of cost functions seen from epoch $1$ through $k$ and uses this as the scheduling policy in epoch $k$. Clearly, this requires knowledge of the cost functions in the current epoch $k$ and hence, it is not an online learning algorithm. In this step, we will show that this algorithm, which looks ahead one epoch into the future, achieves low regret. In the next two steps, we will show that the gap between FPWL and BWL increases only sublinearly in $T$, completing the proof.

Note from \eqref{eq:whittle_policy} that if all cost functions are multiplied by a fixed positive constant, the Whittle and optimal policies remain unchanged. So, we rewrite BWL as:
\begin{equation}
\pi_k^{\text{BWL}} = \text{Whittle} \bigg(\frac{1}{k}\sum_{t=1}^{k} f_t^{(1)},...,\frac{1}{k}\sum_{t=1}^{k} f_t^{(N)}\bigg).
\end{equation}
Since AoI cost functions in each epoch are upper-bounded by $D$, their averages are also upper-bounded by $D$. Thus, we can apply Assumption \ref{ass:whittle_close} to the BWL policy. This results in the following inequality $\forall k \in 1,...,T$
\begin{equation}
C_k(\pi_k^{\text{BWL}}) \leq C_k\Bigg(\text{Opt}\bigg(\frac{1}{k}\sum_{t=1}^{k} f_t^{(1)},...,\frac{1}{k}\sum_{t=1}^{k} f_t^{(N)}\bigg)\Bigg) + \alpha.
\end{equation}
Summing the equation above for $k=1,...,T$, we get
\begin{equation}
\sum_{k=1}^{T} C_k(\pi_k^{\text{BWL}}) \leq \sum_{k=1}^{T} C_k\Bigg(\text{Opt}\bigg(\frac{1}{k}\sum_{t=1}^{k} f_t^{(1)},...,\frac{1}{k}\sum_{t=1}^{k} f_t^{(N)}\bigg)\Bigg) + \alpha T.
\label{eq:fpwl_5}
\end{equation}
Now, we claim that 
\begin{equation}
	\sum_{k=1}^{T} C_k\Bigg(\text{Opt}\bigg(\frac{1}{k}\sum_{t=1}^{k} f_t^{(1)},...,\frac{1}{k}\sum_{t=1}^{k} f_t^{(N)}\bigg)\Bigg) \leq \min_{\pi \in \Pi}\sum_{k=1}^{T} C_k\big(\pi\big).
	\label{eq:fpwl_4}
\end{equation}
To prove this, we use induction. For the base case, observe that the following holds by the definition of $\text{Opt}(\cdot)$.
\begin{equation}
	C_1\bigg(\text{Opt}(f_1^{(1)},...,f_N^{(1)})\bigg) = \min_{\pi \in \Pi} C_1(\pi)
\end{equation}
Further, since costs across epochs are additive, we have the following for any $l \in 1,...,T$:
\begin{equation}
	\sum_{k=1}^{l} C_k\Bigg(\text{Opt}\bigg(\frac{1}{l}\sum_{t=1}^{l} f_t^{(1)},...,\frac{1}{l}\sum_{t=1}^{l} f_t^{(N)}\bigg)\Bigg) = \min_{\pi \in \Pi} \sum_{k=1}^{l} C_k\big(\pi\big).
	\label{eq:fpwl_1}
\end{equation}
The above equation simply states that a policy that is optimal for the sum of cost functions from $1,...,l$ is also the best fixed scheduling policy to be used over the epochs $1,...,l$. 

Let's assume the following for some $l$:
\begin{equation}
		\sum_{k=1}^{l} C_k\Bigg(\text{Opt}\bigg(\frac{1}{k}\sum_{t=1}^{k} f_t^{(1)},...,\frac{1}{k}\sum_{t=1}^{k} f_t^{(N)}\bigg)\Bigg) \leq \min_{\pi \in \Pi}\sum_{k=1}^{l} C_k\big(\pi\big).
		\label{eq:fpwl_3}
\end{equation}
Then, adding the term $C_{l+1}\Bigg(\text{Opt}\bigg(\frac{1}{l+1}\sum\limits_{t=1}^{l+1} f_t^{(1)},...,\frac{1}{l+1}\sum\limits_{t=1}^{l+1} f_t^{(N)}\bigg)\Bigg)$ to both sides we get:
\begin{multline}
	\sum_{k=1}^{l+1} C_k\Bigg(\text{Opt}\bigg(\frac{1}{k}\sum_{t=1}^{k} f_t^{(1)},...,\frac{1}{k}\sum_{t=1}^{k} f_t^{(N)}\bigg)\Bigg) \leq \min_{\pi \in \Pi} \bigg \{\sum_{k=1}^{l} C_k\big(\pi\big) \bigg\} + \\ C_{l+1}\Bigg(\text{Opt}\bigg(\frac{1}{l+1}\sum_{t=1}^{l+1} f_t^{(1)},...,\frac{1}{l+1}\sum_{t=1}^{l+1} f_t^{(N)}\bigg)\Bigg).
\end{multline}
Note that the first term in the RHS is a minimum over all policies, so it can be upper bounded by replacing $\pi$ with any policy. This implies: 
\begin{multline}
\sum_{k=1}^{l+1} C_k\Bigg(\text{Opt}\bigg(\frac{1}{k}\sum_{t=1}^{k} f_t^{(1)},...,\frac{1}{k}\sum_{t=1}^{k} f_t^{(N)}\bigg)\Bigg) \leq \\ \sum_{k=1}^{l+1} C_{k}\Bigg(\text{Opt}\bigg(\frac{1}{l+1}\sum_{t=1}^{l+1} f_t^{(1)},...,\frac{1}{l+1}\sum_{t=1}^{l+1} f_t^{(N)}\bigg)\Bigg).
\end{multline}
Using \eqref{eq:fpwl_1} we can rewrite this as:
\begin{equation}
	\sum_{k=1}^{l+1} C_k\Bigg(\text{Opt}\bigg(\frac{1}{k}\sum_{t=1}^{k} f_t^{(1)},...,\frac{1}{k}\sum_{t=1}^{k} f_t^{(N)}\bigg)\Bigg) \leq \min_{\pi \in \Pi}\sum_{k=1}^{l+1} C_k\big(\pi\big).
	\label{eq:fpwl_2}
\end{equation}
Thus, assuming \eqref{eq:fpwl_3}, we were able to prove \eqref{eq:fpwl_2}. By induction on $l$, this proves \eqref{eq:fpwl_4}. Combining \eqref{eq:fpwl_4} with \eqref{eq:fpwl_5}, we get:
\begin{equation}
\sum_{k=1}^{T} C_k(\pi_k^{\text{BWL}}) \leq \min_{\pi \in \Pi}\sum_{k=1}^{T} C_k\big(\pi\big) + \alpha T.
\label{eq:fpwl_6}
\end{equation}
Finally, \eqref{eq:fpwl_6} together with the definition of static regret \eqref{eq:regret} implies that:
\begin{equation}
	\text{Regret}_T(\text{BWL}) \leq \alpha T.
	\label{eq:bwl}
\end{equation}

\subsection{Be-the-Perturbed-Whittle-Leader has low regret} Now, we consider a policy called Be-the-Perturbed-Whittle-Leader (BPWL). This is similar to the BWL policy, but it involves adding an extra perturbation to the cost functions before computing the Whittle index.

We first describe how the perturbation is generated. First, we generate $NM$ i.i.d. random variables $\delta^{(i)}(j) \sim \text{Uniform}\big([0,1/\epsilon]\big), \forall i \in 1,...,N \text{ and }\forall j \in 1,...,M$. We collect these random variables into $N$ vectors $\delta^{(1)},...,\delta^{(N)}$, where each vector $\delta^{(i)} \in \mathbb{R}^{M}$. Using these, we create monotonically increasing random vectors $\gamma^{(1)},...,\gamma^{(N)}$ as follows:
\begin{equation}
	\gamma^{(i)}(j) = \sum_{k=1}^{j} \delta^{(i)}(k), \forall i \in 1,...,N \text{ and }\forall j \in 1,...,M.
\end{equation}
Now, we have $N$ $M$-dimensional random vectors that are monotonically increasing. Given any set of AoI cost functions $f^{(1)},...,f^{(N)}$, the perturbation procedure is given by:
\begin{equation}
	\text{Perturb}\bigg(f^{(1)},...,f^{(N)}\bigg) = \bigg(f^{(1)} + \gamma^{(1)},...,f^{(N)}  + \gamma^{(N)} \bigg).
\end{equation}

Now, we can describe the hypothetical algorithm called Be-the-Perturbed-Whittle-Leader (BPWL). In epoch $k$, a scheduling policy $\pi_k^{\text{BPWL}}$ is chosen as follows:
\begin{equation}
\pi_k^{\text{BPWL}} = \text{Whittle}\bigg( \text{Perturb}\bigg(\sum_{t=1}^{k} f_t^{(1)},...,\sum_{t=1}^{k} f_t^{(N)}\bigg) \bigg),
\label{eq:bpwl_def}
\end{equation}
where the perturbations are generated i.i.d. for every epoch $k$. We denote the the perturbations in epoch $k$ by $\gamma^{(1)}_k,...,\gamma^{(N)}_k$. Since $\gamma^{(1)}_k,...,\gamma^{(N)}_k$ are monotone increasing functions, they can themselves be viewed as AoI costs. The cost of a policy $\pi$ with the AoI cost functions $\gamma^{(1)}_k,...,\gamma^{(N)}_k$ is denoted by $C_{\gamma_k}(\pi)$. We will use this notation later.

Now, consider a sequence such that in epoch $k$, the AoI cost functions are given by:
\begin{equation}
	\bigg( \tilde{f}_k^{(1)},...,\tilde{f}_k^{(N)} \bigg) = \bigg(f_k^{(1)} + \gamma_k^{(1)} - \gamma_{k-1}^{(1)},...,f_k^{(N)}  + \gamma_k^{(N)} - \gamma_{k-1}^{(N)} \bigg),
	\label{eq:bpwl_4}
\end{equation}
where $\gamma_{0}^{(i)} = \mathbf{0}$ for all $i$. Let $\tilde{C}_k(\pi)$ denote the cost of using scheduling policy $\pi$ in epoch $k$ where the AoI cost functions are $\tilde{f}_k^{(1)},...,\tilde{f}_k^{(N)}$. 

Observe that the cumulative cost functions in epoch $k$ for this hypothetical sequence are given by:
\begin{equation}
\begin{split}
    \bigg( \sum_{t=1}^{k}\tilde{f}_t^{(1)},...,\sum_{t=1}^{k} \tilde{f}_t^{(N)} \bigg) &= \bigg(\sum_{t=1}^{k} f_t^{(1)} + \gamma_k^{(1)},...,\sum_{t=1}^{k} f_t^{(N)}  + \gamma_k^{(N)} \bigg) \\ &= \text{Perturb}\bigg(\sum_{t=1}^{k} f_t^{(1)},...,\sum_{t=1}^{k} f_t^{(N)}\bigg).
\end{split}
\label{eq:bpwl_2}
\end{equation}
Because of the way the perturbations are created the cumulative cost functions $\bigg( \sum_{t=1}^{k}\tilde{f}_t^{(1)},...,\sum_{t=1}^{k} \tilde{f}_t^{(N)} \bigg)$ are monotone increasing functions of AoI in every epoch $k$. Thus, we can apply \eqref{eq:fpwl_6} to this sequence of cost functions to get:
\begin{equation}
\sum_{k=1}^{T} \tilde{C}_k\bigg(\text{Whittle} \bigg(\sum_{t=1}^{k} \tilde{f}_t^{(1)},...,\sum_{t=1}^{k} \tilde{f}_t^{(N)}\bigg)\bigg) \leq \min_{\pi \in \Pi}\sum_{k=1}^{T} \tilde{C}_k\big(\pi\big) + \alpha T.
\label{eq:bpwl_1}
\end{equation}
Now using \eqref{eq:bpwl_2} and the definition of BPWL \eqref{eq:bpwl_def}, we get:
\begin{equation}
\sum_{k=1}^{T} \tilde{C}_k(\pi_k^{\text{BPWL}}) \leq \min_{\pi \in \Pi}\sum_{k=1}^{T} \tilde{C}_k\big(\pi\big) + \alpha T.
\label{eq:bpwl_3}
\end{equation}
Observe that the first term in the RHS is a minimization over all policies $\pi$, so we can replace $\pi$ with $\text{Opt}\bigg(\sum_{t=1}^{T} f_t^{(1)},...,\sum_{t=1}^{T} f_t^{(N)}\bigg)$. This is the best fixed scheduling policy for the \textit{original} sequence of cost functions.
\begin{equation}
	\sum_{k=1}^{T} \tilde{C}_k(\pi_k^{\text{BPWL}}) \leq \sum_{k=1}^{T} \tilde{C}_k \bigg( \text{Opt}\bigg(\sum_{t=1}^{T} f_t^{(1)},...,\sum_{t=1}^{T} f_t^{(N)}\bigg) \bigg) + \alpha T.
	\label{eq:bpwl_5}
\end{equation}
Note that costs across epochs are additive. So, using \eqref{eq:bpwl_4} for any fixed policy $\pi$, we get:
\begin{equation}
	\sum_{k=1}^{T} \tilde{C}_k (\pi) = \sum_{k=1}^{T} \bigg(C_k (\pi) + C_{\gamma_k}(\pi) - C_{\gamma_{k-1}}(\pi) \bigg).
	\label{eq:bpwl_7}
\end{equation}
This further simplifies to:
\begin{equation}
\sum_{k=1}^{T} \tilde{C}_k (\pi) = \sum_{k=1}^{T} \bigg(C_k (\pi)\bigg) + C_{\gamma_T}(\pi).
\label{eq:bpwl_6}
\end{equation}
Applying \eqref{eq:bpwl_6} to \eqref{eq:bpwl_5} and using the definition of $\text{Opt}(\cdot)$ we get:
\begin{equation}
	\sum_{k=1}^{T} \tilde{C}_k(\pi_k^{\text{BPWL}}) \leq 
	\min_{\pi \in \Pi}\sum_{k=1}^{T} C_k\big(\pi\big) + \max_{\pi \in \Pi}C_{\gamma_T}(\pi) + \alpha T. 
	\label{eq:bpwl_8}
\end{equation}
Using \eqref{eq:bpwl_4}, we can also conclude that:
\begin{multline}
	\sum_{k=1}^{T} C_k(\pi_k^{\text{BPWL}}) \leq \sum_{k=1}^{T} \tilde{C}_k(\pi_k^{\text{BPWL}})  + \\ \sum_{k=1}^{T} \bigg|  C_{\gamma_k}(\pi_k^{\text{BPWL}}) - C_{\gamma_{k-1}}(\pi_k^{\text{BPWL}})   \bigg|
	\label{eq:bpwl_9}
\end{multline}
Combining \eqref{eq:bpwl_8} and \eqref{eq:bpwl_9}, we get:
\begin{multline}
	\sum_{k=1}^{T} C_k(\pi_k^{\text{BPWL}})  \leq \min_{\pi \in \Pi}\sum_{k=1}^{T} C_k\big(\pi\big) + \\ \sum_{k=1}^{T} \max_{\pi \in \Pi} \bigg|  C_{\gamma_k}(\pi) - C_{\gamma_{k-1}}(\pi)   \bigg| + \max_{\pi \in \Pi} C_{\gamma_T}(\pi) + \alpha T. 
	\label{eq:bpwl_10}
\end{multline}
Now, we will use a trick that is standard in online learning literature. We will assume that the adversary choosing the sequence of bounded cost functions is non-reactive, i.e the sequence of cost functions is chosen in advance. Thus, for the purposes of expected regret, it is sufficient to use the same perturbations $\gamma^{(1)}_1,...,\gamma^{(N)}_1$ in every epoch (since the adversary cannot learn the perturbations). For this choice of perturbations, \eqref{eq:bpwl_10} simplifies to:
\begin{equation}
	\mathbb{E}\bigg[\sum_{k=1}^{T} C_k(\pi_k^{\text{BPWL}})\bigg]  \leq \min_{\pi \in \Pi}\sum_{k=1}^{T} C_k\big(\pi\big) + 2\max_{\pi \in \Pi} C_{\gamma_1}(\pi) + \alpha T.
\end{equation}  
Observe that the maximum value that $\gamma_1^{(i)}(j)$ can have for any value of $i$ and $j$ is $M/\epsilon$. Thus, by the definition of average cost in an epoch \eqref{eq:split_cost}, we know that:
\begin{equation}
	\max_{\pi \in \Pi} C_{\gamma_1}(\pi) \leq \frac{M}{\epsilon}.
\end{equation}
Putting everything together, we have:
\begin{equation}
\mathbb{E}\big[\text{Regret}_T(\text{BPWL})\big] \leq 2\frac{M}{\epsilon} + \alpha T.
\label{eq:bpwl}
\end{equation}
While we proved this by assuming an oblivious adversary, the extension to a reactive or non-oblivious adversary is straightforward from Lemma 4.1 in \cite{cesa2006prediction}.

\subsection{Follow-the-Perturbed-Whittle-Leader has low regret} 
In this step, we consider the regret of Follow-the-Perturbed-Whittle-Leader (FPWL) described in Algorithm \ref{alg:FTWL}. In epoch $k$, a scheduling policy is chosen as follows:
\begin{equation}
\pi_k^{\text{FPWL}} = \text{Whittle}\bigg( \text{Perturb}\bigg(\sum_{t=1}^{k-1} f_t^{(1)},...,\sum_{t=1}^{k-1} f_t^{(N)}\bigg) \bigg),
\label{eq:fpwl_def}
\end{equation}
Unlike BWL and BPWL, this is a valid online learning algorithm in the full feedback setting since it does not require the cost functions in the current epoch and only uses past information. Now, we will bound the gap between the performance of FPWL and BPWL.

To do this, we state the following Lemma from \cite{kalai2005efficient}.

	\begin{lemma}
		For any $v \in \mathbb{R}^{n}$, the cubes $\big[0,\frac{1}{\epsilon} \big]^{n}$ and $\big[0,\frac{1}{\epsilon} \big]^{n} + v$ overlap in at least a $(1-\epsilon |v|_1)$ fraction.
		\label{lem:overlap}
	\end{lemma}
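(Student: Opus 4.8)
The plan is to interpret the ``overlap fraction'' as the volume of the intersection of the two cubes divided by the volume of a single cube, and then reduce the geometric claim to an elementary product inequality. Both cubes have volume $(1/\epsilon)^n$, so the quantity to be lower bounded is
\[
\frac{\mathrm{Vol}\big([0,1/\epsilon]^n \cap ([0,1/\epsilon]^n + v)\big)}{(1/\epsilon)^n}.
\]

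First I would compute the intersection coordinate by coordinate. The intersection of two axis-aligned boxes factors as a product of one-dimensional interval intersections, so it suffices to intersect $[0,1/\epsilon]$ with $[v_i, v_i + 1/\epsilon]$ in each coordinate $i$. A short case check on the sign of $v_i$ shows this interval intersection has length exactly $\max(0, 1/\epsilon - |v_i|)$: when $v_i \geq 0$ the overlap is $[v_i, 1/\epsilon]$ and when $v_i < 0$ it is $[0, v_i + 1/\epsilon]$, both of length $1/\epsilon - |v_i|$ whenever this is nonnegative, and empty otherwise. Multiplying over the $n$ coordinates, the overlap volume is $\prod_{i=1}^n \max(0, 1/\epsilon - |v_i|)$, and dividing by $(1/\epsilon)^n$ gives an overlap fraction of $\prod_{i=1}^n \max(0, 1 - \epsilon|v_i|)$.

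The remaining task is purely algebraic: setting $a_i = \epsilon|v_i| \geq 0$ so that $\sum_i a_i = \epsilon|v|_1$, I need $\prod_{i=1}^n \max(0, 1 - a_i) \geq 1 - \sum_{i=1}^n a_i$. I would split into two cases. If some $a_i \geq 1$, the product on the left is $0$ while $1 - \sum_i a_i \leq 1 - a_i \leq 0$, so the bound holds trivially. Otherwise every $a_i \in [0,1)$, the truncations are inactive, and the claim reduces to the Weierstrass product inequality $\prod_{i=1}^n (1 - a_i) \geq 1 - \sum_{i=1}^n a_i$. I would prove this by induction on $n$: the base case is an equality, and for the step I use that each partial product $P_{n-1} = \prod_{i<n}(1-a_i)$ lies in $[0,1]$, so that $P_n = P_{n-1} - a_n P_{n-1} \geq P_{n-1} - a_n \geq (1 - \sum_{i<n} a_i) - a_n = 1 - \sum_{i \leq n} a_i$.

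This argument is elementary and I do not expect a serious obstacle. The only points requiring care are the correct reading of ``overlap fraction'' as the intersection volume normalized by the cube volume (which is the quantity the FPWL analysis actually invokes when comparing the leader distributions of FPWL and BPWL), and the sign bookkeeping in the one-dimensional interval intersection. The truncation by $\max(0,\cdot)$ must be tracked so that the degenerate situation where the cubes fail to overlap in some coordinate is handled correctly, which is precisely what the first case above covers.
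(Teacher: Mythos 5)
Your argument is correct. Note that the paper itself gives no proof of this lemma: it is imported verbatim from Kalai and Vempala, so there is nothing internal to compare against. Relative to the standard proof in that reference, you take a slightly different (and arguably cleaner) route: the usual argument bounds the \emph{non}-overlapping fraction by a union bound over coordinates --- a uniformly random point $x$ of the first cube fails to lie in the shifted cube only if some coordinate satisfies $x_i \notin [v_i, v_i + 1/\epsilon]$, an event of probability at most $\epsilon |v_i|$, whence the overlap is at least $1 - \epsilon\sum_i |v_i|$ --- whereas you compute the intersection volume exactly as $\prod_i \max(0, 1 - \epsilon|v_i|)$ and then invoke the Weierstrass product inequality. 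Your version yields the exact overlap fraction as an intermediate byproduct, at the cost of the short induction; the union-bound version is one line but gives only the stated lower bound. Both handle the degenerate case (some $|v_i| \geq 1/\epsilon$) correctly, yours explicitly via the $\max(0,\cdot)$ truncation and the union bound implicitly since the claimed lower bound is then nonpositive. One small point worth keeping in mind for how the lemma is actually used in Appendix D: the relevant $v$ there is the vector of increments $f_k'^{(i)}$, and the ``overlap fraction'' is read as the probability that the two perturbed cumulative costs induce the same distribution over leaders, which is exactly the normalized-intersection-volume reading you adopt.
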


We define the increment function $f'(\cdot)$ for an AoI cost function $f:\{1,...,M\} \rightarrow \mathbb{R}^{+}$ as follows:
\begin{equation}
	f'(i) = f(i) - f(i-1), \forall i \in 1,...,M.
\end{equation}
$f(0)$ is set to zero to have a valid definition for $i=1$. Now, we can rewrite the $\text{Perturb}(\cdot)$
using increment functions rather than cost functions. Thus,
\begin{equation}
	\text{Perturb}\bigg(f'^{(1)},...,f'^{(N)}\bigg) = \bigg(f'^{(1)} + \delta^{(1)},...,f'^{(N)}  + \delta^{(N)} \bigg),
\end{equation}
where $\delta^{(i)}(j) \sim \text{Uniform}\big([0,1/\epsilon]\big), \forall i \in 1,...,N \text{ and }\forall j \in 1,...,M$ are i.i.d. random variables. This allows us to write the perturbation procedure as an addition of i.i.d. uniform random vectors $\delta^{(i)}$. The earlier definition had $\gamma^{(i)}$ which were not element-wise i.i.d.

Now applying Lemma \ref{lem:overlap} we observe that \\ $\text{Perturb}\bigg(\sum\limits_{t=1}^{k-1} f_t^{(1)},...,\sum\limits_{t=1}^{k-1} f_t^{(N)}\bigg)$ and $\text{Perturb}\bigg(\sum\limits_{t=1}^{k} f_t^{(1)},...,\sum\limits_{t=1}^{k} f_t^{(N)}\bigg)$ have the same expectation with probability \[ \geq (1-\epsilon \sum_{i=1}^{N} |f_k'^{(i)}|_1).\] Using linearity of expectation and cost functions, $\mathbb{E}[C_k(\pi_k^{\text{FPWL}})]$ and $\mathbb{E}[C_k(\pi_k^{\text{BPWL}})]$ are also the same with probability \[ \geq (1-\epsilon \sum_{i=1}^{N} |f_k'^{(i)}|_1).\] On the non-overlapping fraction we assume the worst possible cost difference between the algorithms, which can be upper bounded by $D$ since we assume that the AoI cost functions are upper bounded by $D$. Combining all of this together, we get: 
\begin{equation}
	\mathbb{E}[C_k(\pi_k^{\text{FPWL}})] \leq \mathbb{E}[C_k(\pi_k^{\text{BPWL}})] + D\epsilon  \max_{f_k^{(1)},...,f_k^{(N)}} \sum_{i=1}^{N}  |f_k'^{(i)}|_1.
\end{equation}
Observe that since $f_k^{(i)}(\cdot) \leq D$, so $|f_k'^{(i)}|_1 \leq D$, for all $i$. Thus,

\begin{equation}
\mathbb{E}[C_k(\pi_k^{\text{FPWL}})] \leq \mathbb{E}[C_k(\pi_k^{\text{BPWL}})] + \epsilon ND^2.
\end{equation}

Adding the above equation for all $k \in 1,...,T$:

\begin{equation}
	\sum_{k=1}^{T} \mathbb{E}[C_k(\pi_k^{\text{FPWL}})] \leq  \sum_{k=1}^{T} \mathbb{E}[C_k(\pi_k^{\text{BPWL}})] + \epsilon N D^2 T.
\end{equation}
Using \eqref{eq:bpwl}, we finally have regret of the FPWL algorithm:

\begin{equation}
\mathbb{E}\big[\text{Regret}_T(\text{FPWL})\big] \leq \epsilon N D^2 T +  2\frac{M}{\epsilon} + \alpha T.
\label{eq:fpwl_regret}
\end{equation}

Setting $\epsilon = \sqrt{\frac{2M}{N D^2 T}}$, we get:

\begin{equation}
	\mathbb{E}\big[\text{Regret}_T(\text{FPWL})\big] \leq \alpha T + 2D\sqrt{2MNT}.
\end{equation}
This completes our proof.

\section{Proof of Lemma \ref{lem:dyn_mult}}
\label{pf:dyn_mult}
For any given sequence of cost functions $C_1,...,C_T$ that satisfy \eqref{eq:mult_vt}, the performance gap between the decisions $\pi_k$ given by \eqref{eq:fdwl} and choosing the optimal policy in each epoch is given by:
\begin{equation}
\sum_{k=1}^{T} C_k(\pi_k) - \sum_{k=1}^{T} \min_{ \pi }  C_k(\pi)
\end{equation}
We rewrite this as:
\begin{multline}
\sum_{k=2}^{T} \bigg(C_k\big( \argmin_{\pi \in \Pi} C_{k-1}(\pi) \big) - C_{k-1}\big( \argmin_{\pi \in \Pi} C_{k-1}(\pi) \big) \bigg) + \\C_1(\pi_1) - \min_{ x }  C_T(\pi) + \sum_{k=2}^{T} \bigg( C_k(\pi_k) - C_k\big( \argmin_{\pi \in \Pi} C_{k-1}(\pi) \big)  \bigg)
\end{multline}
Using Assumption \ref{ass:whittle_close} and the definition of $\pi_k$, we get:
\begin{equation}
	C_k(\pi_k) - C_k\big( \argmin_{\pi \in \Pi} C_{k-1}(\pi) \big) \leq \alpha, \forall k = 2,...,T.
	\label{eq:dynm1}
\end{equation}
This is because \[\argmin\limits_{\pi \in \Pi} C_{k-1}(\pi) = \text{Opt}(f_{k-1}^{(1)},...,f_{k-1}^{(N)}),\] while $\pi_k = \text{Whittle}(f_{k-1}^{(1)},...,f_{k-1}^{(N)})$.

Using the defintion of $V_T$ \eqref{eq:mult_vt}, the fact that $C_k(\cdot) \in [0,D]$ and the inequality \eqref{eq:dynm1}, we get:
\begin{equation}
\sum_{k=1}^{T} C_k(\pi_k) - \sum_{k=1}^{T} \min_{ \pi }  C_k(\pi) \leq \alpha T + V_T + D.
\end{equation}
Since the above equation is true for any sequence of cost functions that satisfy the $V_T$ constraint, it completes the proof.

\end{document}